\title{Comparing Population Means under Local Differential Privacy:\\with Significance and Power}
\author{Bolin Ding,~ Harsha Nori,~ Paul Li,~ Joshua Allen\\
\{bolind, hanori, paul.li, joshuaa\}@microsoft.com\\
Microsoft, One Microsoft Way, Redmond, WA 98052}
\begin{document}

\newtheorem{theorem}{Theorem}
\newtheorem{definition}{Definition}
\newtheorem{lemma}{Lemma}
\newtheorem{corollary}{Corollary}
\newtheorem{proposition}{Proposition}

\newcommand{\qed}{\hfill $\framebox(6,6){}$}

\newenvironment{proof}{\par{\noindent \bf Proof:}}{\qed \par \smallskip}

\newenvironment{proofof}[1]{\par{\noindent \bf Proof of #1:}}{\qed \par}

\newenvironment{proofsketch}{\par{\noindent \bf Proof Sketch:}}{\qed \par}

\newcommand{\eop}{{\hspace*{\fill}$\Box$\par}}

\newcommand{\stitle}[1]{\noindent{\bf #1}}

\newcommand{\cequ}{Equation~}
\newcommand{\cequs}{Equations~}
\newcommand{\csec}{Section~}
\newcommand{\csecs}{Sections~}
\newcommand{\cexa}{Example~}
\newcommand{\cexas}{Examples~}
\newcommand{\cdef}{Definition~}
\newcommand{\cdefs}{Definitions~}
\newcommand{\cthm}{Theorem~}
\newcommand{\cthms}{Theorems~}
\newcommand{\clem}{Lemma~}
\newcommand{\clems}{Lemmas~}
\newcommand{\cprop}{Proposition~}
\newcommand{\cprops}{Propositions~}
\newcommand{\calg}{Algorithm~}
\newcommand{\calgs}{Algorithms~}
\newcommand{\cfig}{Figure~}
\newcommand{\cfigs}{Figures~}

\newcommand{\ie}{i.e.\xspace}
\newcommand{\eg}{e.g.\xspace}
\newcommand{\st}{s.t.\xspace}
\newcommand{\etc}{etc.\xspace}

\newcommand{\pr}[1]{{\bf Pr}\!\left[#1\right]\xspace}
\newcommand{\ep}[1]{{\bf E}\!\left[#1\right]\xspace}
\newcommand{\vr}[1]{{\bf Var}\!\left[#1\right]\xspace}
\newcommand{\bigoh}[1]{{\rm O}\!\left(#1\right)\xspace}
\newcommand{\bigtheta}[1]{{\rm \Theta}\!\left(#1\right)\xspace}
\newcommand{\bigomega}[1]{{\rm \Omega}\!\left(#1\right)\xspace}

\newcommand{\binomial}{{{\mathfrak B}\xspace}}
\newcommand{\normal}{{{\mathfrak N}\xspace}}
\newcommand{\algo}{{{\mathfrak A}\xspace}}
\newcommand{\algobit}[1]{{{\mathfrak M}_{#1}\xspace}}
\newcommand{\test}{{{\mathfrak T}\xspace}}
\newcommand{\testest}{\test^{\sf est}\xspace}
\newcommand{\testbin}{\test^{\sf bin}\xspace}
\newcommand{\testmix}{\test^{\sf mix}\xspace}
\newcommand{\eps}{{{\varepsilon}\xspace}}
\newcommand{\one}{{\bf 1}\xspace}
\newcommand{\pa}{{\bf A}\xspace}
\newcommand{\pb}{{\bf B}\xspace}
\newcommand{\px}{{\bf X}\xspace}
\newcommand{\py}{{\bf Y}\xspace}

\maketitle

\begin{abstract}
A statistical hypothesis test determines whether a hypothesis should be rejected based on samples from populations.
In particular, randomized controlled experiments (or A/B testing) that compare population means using, \eg, $t$-tests, have been widely deployed in technology companies to aid in making data-driven decisions.
Samples used in these tests are collected from users and may contain sensitive information.
Both the data collection and the testing process may compromise individuals' privacy.
In this paper, we study how to conduct hypothesis tests to compare population means while preserving privacy. We use the notation of {\em local differential privacy} (LDP), which has recently emerged as the main tool to ensure each individual's privacy without the need of a {\em trusted data collector}. We propose LDP tests that inject noise into every user's data in the samples before collecting them (so users do not need to trust the data collector), and draw conclusions with bounded type-I (significance level) and type-II errors ($1-$ power). Our approaches can be extended to the scenario where some users require LDP while some are willing to provide exact data. We report experimental results on real-world datasets to verify the effectiveness of our approaches.
\end{abstract}

\newcommand{\appendixtext}{{the appendix}\xspace}

\section{Introduction}
\label{sec:intro}

Randomized controlled experiments (or A/B testing) and hypothesis tests are used by many companies, \eg, Google, Facebook, Amazon, and Microsoft, to design and improve their products and services
\cite{kdd:TangAOM10,ics:Panger16,url:KohaviR04,kdd:KohaviDFLWX12}.
These statistical techniques base business decisions on samples of actual customer data collected during experiments to draw more informed conclusions. However, such data samples usually contain sensitive information, \eg, usage statistics of certain apps or services; in order to meet users' privacy expectations and tightening privacy regulations (\eg, European GDPR law), ensuring that these experiments and tests do not breach the privacy of individuals is an important problem.
%

Differential privacy (DP) \cite{TCC06} has emerged as a standard for the privacy guarantees, and been used by, \eg, Apple \cite{url:apple}, Google \cite{ccs:ErlingssonPK14}, and Uber \cite{url:uber}.
In a well-studied DP model used by, \eg, \cite{icml:RogersVLG16}, users trust and send exact data to a data collector, who then injects noise in the testing process to ensure DP. However, this model is not applicable in our setup, as users may not trust the data collector (\eg, a tech company) due to potential hacks and leaks \cite{url:hacks}, and prefer not to have unprivatized data leave their devices.
%
%
Therefore, we adopt the {\em local model of differential privacy} (LDP) \cite{focs:DuchiJW13}.
Under LDP, users do not need to trust the data collector. Before sent to the data collector, each user's data is privatized by a randomized algorithm with the property that the likelihood of any specific output of the algorithm varies little with the input, \ie, the exact data.

In this paper, we study how to conduct hypothesis tests to compare population means (\eg, in A/B testing), while ensuring LDP for each user. We focus on the class of $t$-tests when presenting our solutions -- they can be easily extended for $Z$-tests if populations follow Normal distributions.


An A/B test splits users randomly into two populations, to give them two different experiences, a {\em control} and a {\em treatment}, respectively, and then tests for differences between the two population means in a measure of interest (clicks, usage, monetization, \etc). A {\em null hypothesis} $H_0$ is that the two population means are equal or differ by a fixed constant. Statistical tests (\eg, $t$-tests) are used to determine whether the null hypothesis should be rejected based on random samples from the populations. To measure errors in the conclusions, {\em type-I error} is the probability of falsely rejecting $H_0$ when it is true, and {\em type-II error}, or complement of {\em statistical power}, is the probability of failing to reject $H_0$ when it is false. We want a test to have type-I error bounded by a pre-specified threshold, called {\em significance level}, and have high {power}.

A typical test has three common steps: 1) compute the observed value of a {\em test statistic} from samples; 2) calculate the {\em p-value}, \ie, the probability, under $H_0$, of a test statistic being more extreme than the observed one; 3) reject $H_0$ if and only if the p-value is less than the significance level.


\stitle{Challenges and our contributions.} In $t$-tests (as well as $Z$-tests when population variances are unknown) that compare population means, sample means and sample variances are the essential terms in the test statistic to be computed in step 1). While there are several approaches, \eg, \cite{focs:DuchiJW13}, to estimate means, there is no known technique to estimate sample variances under LDP. In fact, we show that {\em any} estimator to variances based on a previous LDP mechanism \cite{nips:DingKY17} has a very large worst-case error (\cprop\ref{prop:varhardness}).

Our first approach, called {\em estimation-based LDP test}, is based on a seemingly direct idea.
We propose a new LDP mechanism to estimate sample variances, using which we obtain an estimation of the observed test statistic in step 1) to calculate the p-value and then draw the conclusion.

One of the most important goals of hypothesis testing is to control the probability of drawing a false conclusion in terms of type-I and type-II errors. 
While our new LDP variance estimator is of independent interest, the first approach is unsatisfying in achieving this goal, especially when the size of the data domain is large. As errors in both the estimator to sample means and the one to variances are proportional to the domain size, it is hard to bound the error in estimating the test statistic in step 1), so there is no theoretical guarantee on type-I and type-II errors in our first approach.



%
%

The second approach we propose, called {\em transformation-based LDP test}, aims to provide {\em an upper bound of type-II error} at a pre-specified significance level, \ie, {\em a hard constraint of type-I error}. The main idea is to look into the relationship between the original distribution of a population and the distribution on the outputs of the LDP data-collection algorithm on users' data, called {\em transformed distribution}. Instead of estimating sample means and sample variances under LDP, we directly conduct tests based on LDP samples from the transformed distributions -- the conclusion can then be translated into a conclusion of the test on the original population (\ie, rejecting or accepting $H_0$).
The upper bound on type-II error during A/B testing is critical in estimating the number of users needed in the samples to detect significant differences between the control and the treatment populations.
We derive such an estimation of sample sizes needed to reduce type-II error below a threshold at the specified significance level.
This approach can be extended to a hybrid-privacy scenario where some users require LDP while some are willing to provide exact data. 

Experiments are conducted on real datasets to verify our theoretical results and the effectiveness of our approaches.

\stitle{Related work.} There are a long line of works on hypothesis testing under the DP model {\em with} a {\em trusted} data collector, with genome-wide association studies as a primary application. In this setup, the data collector receives exact samples from users.
The first type of approaches inject noise into aggregates (or marginal tables) of data to ensure DP, and compute or estimate the test statistic in step 1) from these noisy aggregates \cite{psd:FienbergRY10,psd:KarwaS12,kdd:JohnsonS13,annstat:KarwaS2016}. The intuition is that the impact of the DP noise is small when the sample size is large enough \cite{icdm:VuS09}. However, it is shown that certain tests, \eg, $\chi^2$-tests, perform poorly when used with the estimated statistic \cite{icml:RogersVLG16}, leading to much higher type-I error than the specified amount.
The second type of approaches \cite{jpc:UhleropSF13,jbi:YuFSU14,corr:WangLK15,icml:RogersVLG16} try to derive the asymptotic distribution of the estimated test statistic. Since this asymptotic distribution cannot be written analytically, Monte Carlo simulations or numerical approximations are used to calculate the p-value in step 2).
More recently, for $\chi^2$-tests, unit circle mechanism \cite{icml:KakizakiFS17} utilizes the geometrical property of the test statistics and achieves a sharp reduction on the type-II errors; and independently, new test statistics \cite{aistats:RogersK17} are proposed, so that their asymptotic distributions with DP noise injected are close to the asymptotics of the classical (non-private) tests.

To our best knowledge, our work is the first on statistical hypothesis tests to compare population means under LDP. One of our primary applications is A/B testing in software companies, so LDP is a proper privacy guarantee for each user without the need of trusting the data collector. LDP $\chi^2$-testing is studied in \cite{phd:Rogers17}, to test goodness of fit and independence for multinomial distributions. LDP hypothesis tests to distinguish between two specific distributions are studied in \cite{nips:KairouzOV14}. 

Another relevant line of works are about parameter estimations under LDP, including, \eg, mean/density estimations \cite{focs:DuchiJW13,corr:DuchiWJ16,nips:DingKY17}, and histogram estimations \cite{nips:DuchiWJ13,kairouz2016discrete,corr:WangHWNXYLQ16,edbt:0009WH16}. Communication and computation-efficient mechanisms are developed for histogram estimations over large domains to find heavy hitters \cite{stoc:BassilyS15,uss:WangBLJ17,corr:BassilyNST17}.
Industrial deployments of LDP techniques on this line enhance privacy using memorization \cite{ccs:ErlingssonPK14,popets:FantiPE16}.
%
%
%
How to find heavy hitters is also studied in a hybrid-privacy model with both LDP and DP users \cite{uss:AventKZHL17}.

\section{Preliminaries}

Each {\em user} has a private {\bf real-valued} {\em counter} $x \in \Sigma = [0, m]$ (to measure, \eg, app usage). Our approaches can be easily extended for general domains like $[-m,m]$, but we focus on $[0,m]$ for the simplicity of presentation. Let $[n] = \{1, 2, \ldots, n\}$ and $X = \{x_i\}_{i \in [n]}$ be a (sample) set of counters from $n$ users. We use $\mu_X = \sum_i x_i/n$ and $s^2_X = \sum_{i} (x_i - \mu_X)^2/(n-1)$ to denote the {\em sample mean} and {\em sample variance} of $X$, respectively. We use $\px$ (or $\pa$, $\pb$) to denote both a population and the distribution of this population, from which a sample $X$ (or $A$, $B$) is drawn.


\stitle{Hypothesis testing to compare population means.}
%
%
%
Let $\pa$ and $\pb$ be the distributions of counters in the {\em control} and the {\em treatment} populations, respectively. Let $\mu_\pa$ and ${\mu_\pb}$ be the {\em population means} ({\em expectations}). The {\em null hypothesis} $H_0$ is $\mu_\pa - \mu_\pb = d_0$, and the {\em alternative hypothesis} $H_1$ is, \eg, $\mu_\pa - \mu_\pb \neq d_0$.
We randomly pick $n_A$ and $n_B$ users from the populations $\pa$ and $\pb$, respectively, and let 
$A = \{a_i\}_{i \in [n_A]}$ and $B = \{b_i\}_{i \in [n_B]}$ be the corresponding samples.

A test is an algorithm $\test$ that takes the two samples $A$ and $B$ and decides whether to reject or accept $H_0$ at a pre-specified {\em significance level} $\alpha$.
We require $\test$ to have {\em type-I error} at most $\alpha$, \ie, $\pr{\test(A, B; \alpha, H_0) = \text{reject} \mid H_0} \leq \alpha$, and {\em type-II error} $\pr{\test(A, B; \alpha, H_0) = \text{accept} \mid H_1} = \beta$ as small as possible. $1-\beta$ is called the {\em statistical power} of $\test$.
The probability is taken over the randomness from the data generation (of $A$ and $B$) and the possible randomness in $\test$.

A key step in a test is to compute the observed value of a test statistic from samples. To compare population means, it is usually a function of six parameters: sample means $\mu_A$ and $\mu_B$, sample variances $s^2_A$ and $s^2_B$, and sample sizes $n_A$ and $n_B$. In $t$-tests, we also need to obtain the {\em degrees of freedom}. For example, in Welch's $t$-test, they are, respectively
\begin{equation} \label{equ:tstatistic}
t \! = \! \frac{(\mu_A - \mu_B) - d_0}{\sqrt{s^2_A/n_A + s^2_B/n_B}} ~~\text{and}~~ df \! = \! \frac{(s_A^2/n_A + s_B^2/n_B)^2}{\frac{(s_A^2/n_A)^2}{n_A-1} + \frac{(s_B^2/n_B)^2}{n_B-1}}.
\end{equation}
%


\stitle{Local model of differential privacy (LDP).}
%
%
In the {\em local model of differential privacy} (LDP) \cite{nips:DuchiWJ13,stoc:BassilyS15}, also called randomized response model \cite{warner1965randomized}, $\gamma$-amplification \cite{evfimievski2003limiting}, or FRAPP \cite{agrawal2005framework}, private data from each user is randomized by an algorithm $\algo$ before being sent to data collector. 


\begin{definition}[Local model of differential privacy] \label{def:ldp}
A randomized algorithm $\algo: \Sigma \rightarrow \mathcal{Z}$ is $\epsilon$-locally differentially private ($\epsilon$-LDP) if for any pair of values $x, y \in \Sigma$ and any subset of output $S \subseteq \mathcal{Z}$, we have that \[\pr{\algo(x) \in S} \leq e^\epsilon\cdot \pr{\algo(y) \in S}.\]
\end{definition}

One interpretation of LDP is that no matter what output is released from $\algo$, it is approximately equally as likely to have come from one value $x \in \Sigma$ as any other. Unlike the DP model used in \cite{icml:RogersVLG16,aistats:RogersK17}, users do not need to trust the data collector in LDP.
%

\stitle{Problem statement: LDP mean-comparison test.}
Each user in the control and the treatment has a counter. Two random samples of counters $A$ and $B$ are drawn from the control and the treatment distributions $\pa$ and $\pb$, respectively. With the {null hypothesis} $H_0$: $\mu_\pa - \mu_\pb = d_0$, we want to design a test, such that: i) each counter in $A$ and $B$ is collected from the user in an $\eps$-LDP way, ii) its type-I error $\leq$ significance level $\alpha$, and iii) type-II error is as small as possible.




\stitle{Building block: 1-bit LDP data collection.}
We will utilize the following $\eps$-LDP mechanism $\algobit{\eps,m}$ from \cite{nips:DingKY17} to privatize each counter in samples $A$ and $B$.
For each user with a counter $x$, it generates a noisy bit ($0$ or $1$), independently, and sends to the data collector
\begin{equation}\label{equ:1bitmech}
\algobit{\eps,m}(x) =
\left\{
\begin{array}{ll}
\!\! 1,& \!\!\!\! \hbox{with probability} ~ \frac{1}{e^\epsilon+1}+\frac{x}{m}\cdot \frac{e^\epsilon-1}{e^\epsilon + 1}; \\
\!\! 0,& \!\!\!\! \hbox{otherwise}.
\end{array}
\right.
\end{equation}

\begin{proposition}\label{prop:1bitdp}
The mechanism $\algobit{\eps,m}$ is $\eps$-LDP.
\end{proposition}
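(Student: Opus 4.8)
The plan is to exploit the fact that the output alphabet of $\algobit{\eps,m}$ is just $\{0,1\}$, so Definition~\ref{def:ldp} need only be checked for the two singleton events $S=\{1\}$ and $S=\{0\}$; the remaining cases $S=\emptyset$ and $S=\{0,1\}$ give the trivial inequalities $0\le 0$ and $1\le e^\eps$. Write $p(x)=\pr{\algobit{\eps,m}(x)=1}=\frac{1}{e^\eps+1}+\frac{x}{m}\cdot\frac{e^\eps-1}{e^\eps+1}$. Since $\eps\ge 0$ we have $e^\eps-1\ge 0$, so $x\mapsto p(x)$ is affine and nondecreasing on $\Sigma=[0,m]$, and thus its range is exactly the interval $[p(0),p(m)]=\left[\frac{1}{e^\eps+1},\,\frac{e^\eps}{e^\eps+1}\right]$; by the complementary identity $1-p(x)$ ranges over the same interval.

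First I would dispatch $S=\{1\}$: for any $x,y\in\Sigma$, $\dfrac{\pr{\algobit{\eps,m}(x)=1}}{\pr{\algobit{\eps,m}(y)=1}}=\dfrac{p(x)}{p(y)}\le\dfrac{\max_{x}p(x)}{\min_{y}p(y)}=\dfrac{e^\eps/(e^\eps+1)}{1/(e^\eps+1)}=e^\eps$. Then I would run the identical argument for $S=\{0\}$ with $1-p$ in place of $p$, obtaining $\dfrac{1-p(x)}{1-p(y)}\le e^\eps$ by the same endpoint computation. Combining these with the two trivial choices of $S$ yields $\pr{\algobit{\eps,m}(x)\in S}\le e^\eps\cdot\pr{\algobit{\eps,m}(y)\in S}$ for every $x,y\in\Sigma$ and every $S\subseteq\{0,1\}$, which is precisely the definition of $\eps$-LDP.

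There is essentially no hard step here; the only points requiring care are (i) not forgetting the $S=\{0\}$ event, since privacy must hold for every output event and not merely for ``report $1$'', and (ii) noting that the denominators $p(y)$ and $1-p(y)$ are bounded below by $\frac{1}{e^\eps+1}>0$, so the ratios are well defined. As an optional remark I would observe that the bound is tight: the ratio $e^\eps$ is attained at $x=m,\ y=0$ for $S=\{1\}$ (and symmetrically at $x=0,\ y=m$ for $S=\{0\}$), so $\algobit{\eps,m}$ expends its privacy budget exactly.
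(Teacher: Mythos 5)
Your proof is correct and takes essentially the same route as the paper's: both observe that $\pr{\algobit{\eps,m}(x)=1}$ and $\pr{\algobit{\eps,m}(x)=0}$ each range over $\left[\frac{1}{e^\eps+1},\frac{e^\eps}{e^\eps+1}\right]$, so every ratio of corresponding probabilities is at most $e^\eps$. Your only addition is the explicit enumeration of all four subsets $S\subseteq\{0,1\}$, which the paper leaves implicit.
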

%

This mechanism can be interpreted as firstly rounding $x$ to a bit $1$ with probability ${x}/{m}$ or $0$ otherwise, and flipping the bit with probability $\frac{1}{e^\epsilon + 1}$.
It is communication-efficient (only one bit is sent) and can be seen as a simplification of the multidimensional mean-estimation mechanism in \cite{focs:DuchiJW13}.

$\algobit{\eps,m}$ can be used for mean estimation. Suppose there are $n$ users: $X = \{x_i\}_{i \in [n]}$. We collect $x'_i = \algobit{\eps,m}(x_i)$ from each user $i$. The mean $\mu_X = \sum_i x_i/n$ can be estimated from the noisy bits $X' = \algobit{\eps,m}(X) = \{x'_i\}_{i \in [n]}$ as:
\begin{equation}\label{equ:meanest}
\hat\mu_{\eps,m}(X') = \frac{m}{n}\sum_{i=1}^{n} \frac{x'_i \cdot (e^\eps+1)-1}{e^\eps-1}.
\end{equation}
It is shown in \cite{nips:DingKY17} that:
\begin{proposition}\label{prop:1bitmean}
The estimator $\hat\mu_{\eps,m}(X')$ is unbiased:\\i) $\ep{\hat\mu_{\eps,m}(X')} = \mu_X$, and ii) $\vr{\hat\mu_{\eps,m}(X')} = \bigoh{\frac{m^2}{n\eps^2}}$.
\end{proposition}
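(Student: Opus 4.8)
The plan is to establish both claims by a direct calculation, using the single structural fact that each collected bit $x'_i = \algobit{\eps,m}(x_i)$ is an independent Bernoulli random variable whose success probability is an affine function of the private counter $x_i$.

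For part~(i), I would first read off from \cequ\ref{equ:1bitmech} that $\ep{x'_i} = \pr{x'_i = 1} = \frac{1}{e^\eps+1} + \frac{x_i}{m}\cdot\frac{e^\eps-1}{e^\eps+1}$. Substituting this into the per-user summand of the estimator in \cequ\ref{equ:meanest} and using linearity of expectation, $\ep{\frac{x'_i(e^\eps+1)-1}{e^\eps-1}} = \frac{\ep{x'_i}(e^\eps+1)-1}{e^\eps-1}$; here the constant term contributes $\frac{1}{e^\eps+1}\cdot(e^\eps+1) = 1$, which cancels the $-1$, leaving $\frac{(x_i/m)(e^\eps-1)}{e^\eps-1} = x_i/m$. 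Multiplying by $m$ and averaging over $i \in [n]$ gives $\ep{\hat\mu_{\eps,m}(X')} = \frac1n\sum_{i=1}^n x_i = \mu_X$.

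For part~(ii), since $\algobit{\eps,m}$ is applied to each user independently, the $x'_i$ are mutually independent, so
\[
\vr{\hat\mu_{\eps,m}(X')} = \frac{m^2}{n^2}\left(\frac{e^\eps+1}{e^\eps-1}\right)^2 \sum_{i=1}^{n} \vr{x'_i} \le \frac{m^2}{4n}\left(\frac{e^\eps+1}{e^\eps-1}\right)^2,
\]
using that a Bernoulli variable has variance $\pr{x'_i=1}\left(1-\pr{x'_i=1}\right) \le \tfrac14$. It then remains to check that the $\eps$-dependent factor is $\bigoh{1/\eps^2}$: writing $\frac{e^\eps+1}{e^\eps-1} = 1 + \frac{2}{e^\eps-1}$ and using $e^\eps - 1 \ge \eps$ gives $\frac{e^\eps+1}{e^\eps-1} \le 1 + \frac{2}{\eps}$, so the factor is $\bigoh{1/\eps^2}$ in the usual regime $\eps = \bigoh{1}$, which yields $\vr{\hat\mu_{\eps,m}(X')} = \bigoh{m^2/(n\eps^2)}$.

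There is no real obstacle in this argument — it is a short computation — so the only things worth flagging are interpretive: the bound $\bigoh{m^2/(n\eps^2)}$ describes the small-$\eps$ behaviour (for large $\eps$ the factor $\left(\frac{e^\eps+1}{e^\eps-1}\right)^2 \to 1$ and the variance is in fact $\bigoh{m^2/n}$), and the step $\vr{x'_i}\le \tfrac14$ is worst-case, so without further assumptions on the $x_i$ no sharper domain-dependent bound is available.
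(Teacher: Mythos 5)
Your proof is correct. Note that the paper itself does not prove \cprop\ref{prop:1bitmean}: it is quoted from the cited prior work (Ding, Kulkarni, and Yekhanin, 2017), so there is no in-paper argument to compare against. Your direct computation is the standard one: unbiasedness follows from the affine inverse of the affine map $x_i \mapsto \pr{x'_i=1}$ in \cequ\eqref{equ:1bitmech}, and the variance bound follows from independence, $\vr{x'_i}\le 1/4$, and $\frac{e^\eps+1}{e^\eps-1} = \bigoh{1/\eps}$ for $\eps = \bigoh{1}$. Your closing caveat is also right: the $\bigoh{m^2/(n\eps^2)}$ statement is implicitly for the small-$\eps$ regime, which is the convention throughout this literature.
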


\section{Estimation-based LDP Test}

Given two samples $A = \{a_i\}_{i \in [n_A]}$ and $B = \{b_i\}_{i \in [n_B]}$, one straightforward starting point is to use $\algobit{\eps,m}$ to collect each counter, to preserve $\eps$-LDP for each user, and estimate parameters (sample means $\mu_A$ and $\mu_B$ and sample variances $s^2_A$ and $s^2_B$) in, \eg, \eqref{equ:tstatistic}, based on $\algobit{\eps,m}(A) = \{\algobit{\eps,m}(a_i)\}$ and $\algobit{\eps,m}(B) = \{\algobit{\eps,m}(b_i)\}$ to conduct a $t$-test.
%

We can obtain estimators to $\mu_A$ and $\mu_B$ using $\hat\mu_{\eps,m}$ \eqref{equ:meanest}. However, it is difficult to estimate the sample variances from the LDP data collection $\algobit{\eps,m}(A)$ and $\algobit{\eps,m}(B)$.
The intuition is as follows.  Consider two distributions: a counter from distribution $\px$ is always a constant $m/2$; a counter from $\py$ is $0$ with probability $1/2$, and $m$ otherwise. After applying $\algobit{\eps,m}$ on two samples $X$ and $Y$ from $\px$ and $\py$, respectively, the LDP samples $\algobit{\eps,m}(X)$ and $\algobit{\eps,m}(Y)$ follow the same distribution (cannot be distinguished from each other), but the gap between $s^2_X$ and $s^2_Y$ is $\bigomega{m^2}$ with high probability.
In general, we have a hardness result:

\begin{proposition} \label{prop:varhardness}
If each counter in $X = \{x_i\}$ is collected using $\algobit{\eps,m}$, any estimator $\hat s^2$ to $s^2_X$ based on $\algobit{\eps,m}(X) = \{\algobit{\eps,m}(x_i)\}$ has worst-case error $|\hat s^2 - s^2_X|$ at least $\bigomega{m^2}$.
\end{proposition}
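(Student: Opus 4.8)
The plan is to exhibit two data-generating scenarios that are \emph{indistinguishable} after the $\eps$-LDP data collection but whose true sample variances differ by $\bigomega{m^2}$, and then argue that no function of the LDP transcript can be accurate on both. Concretely, let $\px$ be the point mass at $m/2$, and let $\py$ be uniform on $\{0,m\}$. First I would compute, directly from \eqref{equ:1bitmech}, the output distribution in each case: for $x=m/2$ the bit is $1$ with probability $\frac{1}{e^\eps+1}+\frac12\cdot\frac{e^\eps-1}{e^\eps+1}=\frac12$, and for $x$ uniform on $\{0,m\}$ the bit is $1$ with probability $\frac12\cdot\frac{1}{e^\eps+1}+\frac12\cdot\frac{e^\eps}{e^\eps+1}=\frac12$. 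Hence, if $X$ is an i.i.d.\ sample from $\px$ and $Y$ an i.i.d.\ sample from $\py$ (both of size $n$), then $\algobit{\eps,m}(X)$ and $\algobit{\eps,m}(Y)$ have exactly the same distribution, namely $n$ independent fair coin flips.

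Next I would pin down the two sample variances. For $X$ drawn from $\px$ every coordinate equals $m/2$, so $s^2_X=0$ deterministically. For $Y$ drawn from $\py$, $s^2_Y$ is the usual unbiased estimator of the population variance $m^2/4$, so $\ep{s^2_Y}=m^2/4$ and, since the entries are bounded by $m$, $\vr{s^2_Y}=\bigoh{m^4/n}$; in particular, for $n$ above a small absolute constant, $s^2_Y\ge m^2/8$ with constant probability (Chebyshev), and $\ep{|s^2_Y-m^2/4|}=\bigoh{m^2/\sqrt n}$ by Cauchy--Schwarz.

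Then comes the indistinguishability step. Since $\hat s^2$ is computed solely from the LDP transcript, it is a (possibly randomized) function $g$ of that transcript; write $\nu=\ep{g}$ when the transcript is $n$ fair coin flips. On the dataset $X$ (all $m/2$) the expected error is $\ep{|\hat s^2-0|}\ge|\nu|$. On the $\py$ side, the transcript averaged over the draw of $Y$ and the mechanism is exactly $n$ fair coin flips, so the expected value of $|\hat s^2-m^2/4|$ (expectation over $Y$ and the mechanism) is at least $|m^2/4-\nu|$; combining this with $\ep{|s^2_Y-m^2/4|}=\bigoh{m^2/\sqrt n}$ and the triangle inequality, the expected error of $\hat s^2$ against the \emph{actual} $s^2_Y$ is at least $|m^2/4-\nu|-\bigoh{m^2/\sqrt n}$. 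Since $|\nu|+|m^2/4-\nu|\ge m^2/4$, at least one of the two expected errors is $\ge m^2/8-\bigoh{m^2/\sqrt n}=\bigomega{m^2}$ once $n$ exceeds a constant, so there is a dataset on which $|\hat s^2-s^2_X|=\bigomega{m^2}$ in expectation, hence with constant probability; a one-sided union bound gives the same conclusion in the ``with constant probability'' reading of worst-case error.

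The step I expect to be most delicate is the averaging over the random draw of $Y$: for a \emph{fixed} $\{0,m\}$-valued dataset the LDP transcript is \emph{not} a sequence of fair coins, so the clean coupling only holds after mixing over $Y\sim\py^{n}$; one must therefore combine this mixture identity with the concentration of $s^2_Y$ around $m^2/4$ rather than fixing a single worst-case $Y$. Everything else is a direct computation of flip probabilities plus the standard observation that one fixed output distribution cannot be simultaneously close to $0$ and close to $m^2/4$.
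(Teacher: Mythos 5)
Your proposal is correct and follows essentially the same route as the paper's own proof: the identical pair of distributions (point mass at $m/2$ versus uniform on $\{0,m\}$), the observation that both yield fair-coin transcripts under $\algobit{\eps,m}$, and the $\bigomega{m^2}$ gap between the resulting sample variances. You simply fill in the indistinguishability/triangle-inequality details the paper leaves implicit, and your remark that the transcript is only a fair-coin sequence after averaging over $Y\sim\py^n$ is a valid refinement of a point the paper glosses over.
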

\begin{proof}
In \appendixtext.
%
\end{proof}


\stitle{LDP Data Collection for Variance Estimation.}
The above proposition essentially says that estimating the sample variance of a sample $X$ based on $\algobit{\eps,m}(X)$ leads to unbounded error, as a sample variance itself is bounded by $\bigoh{m^2}$.
In order to obtain a reasonable estimation for sample variances, we need to collect two LDP bits from each user.

Indeed, the sample variance $s^2_X$ can be rewritten as
\begin{equation} \label{equ:var}
s^2_X = \frac{1}{n-1}\sum_{i} \left(x_i - \mu_X\right)^2 = \frac{n}{n-1}\left(\mu_{X^2} - \mu_X^2\right),
\end{equation}
where $X^2$ is defined to be $\{x_i^2\}_{i \in [n]}$ and $\mu_{X^2} = \frac{1}{n}\sum_i x_i^2$.

The sequential composability \cite{sigmod:McSherry09} of DP also holds for LDP (considering a dataset with one user). We split the privacy budget into $\eps = \eps_1 + \eps_2$.
For each user with a counter $x_i$, the first bit to be collected is $x'_i = \algobit{\eps_1,m}(x_i)$, which can be also used to estimate mean $\mu_X$. The second bit is $x''_i = \algobit{\eps_2,m^2}(x_i^2)$, which will be used to estimate $\mu_{X^2}$ (the range of $x^2_i$ is $[0,m^2]$). Note that the two bits are collected with independent randomnesses.
From the sequential composability, we preserve $(\eps_1+\eps_2)$-LDP for each user.

After collecting $X' = \{x'_i\}_{i \in [n]}$ and $X'' = \{x''_i\}_{i \in [n]}$ from users in $X$,
the sample variance can be estimated as
\begin{equation}\label{equ:varest}
\hat s^2_{\eps_1, \eps_2, m}(X', X'') = \frac{n\left(\hat\mu_{\eps_2,m^2}(X'') - \hat\mu^2_{\eps_1,m}(X')\right)}{n-1},
\end{equation}
where $\hat\mu_{\eps_1,m}$ and $\hat\mu_{\eps_2,m^2}$ are defined as in \eqref{equ:meanest}.
We have the following result about the accuracy of $\hat s^2_{\eps_1, \eps_2, m}$.
%
%
\begin{proposition}\label{prop:2bitvar}
$s^2_{\eps_1, \eps_2, m}(X', X'')$ is an estimator to $s^2_X$:
\\
i) $s_X^2 - \bigoh{\frac{m^2}{n\eps_1^2}} \leq \ep{\hat s^2_{\eps_1, \eps_2, m}(X', X'')} \leq s_X^2$, and
\\
ii) $\vr{\hat s^2_{\eps_1, \eps_2, m}(X', X'')} = \bigoh{\frac{m^4}{n^2\eps_1^4} + \frac{m^4}{n\eps_2^2}}$.
\end{proposition}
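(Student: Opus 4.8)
The plan is to analyze the estimator $\hat s^2_{\eps_1,\eps_2,m}(X',X'') = \frac{n}{n-1}\bigl(\hat\mu_{\eps_2,m^2}(X'') - \hat\mu^2_{\eps_1,m}(X')\bigr)$ by decomposing it into the two mean-estimator pieces and invoking \cprop\ref{prop:1bitmean} on each. For the bias (part i), note that the two bits are collected with independent randomnesses, so $\ep{\hat\mu_{\eps_2,m^2}(X'')}$ and $\ep{\hat\mu^2_{\eps_1,m}(X')}$ can be handled separately. By \cprop\ref{prop:1bitmean}(i), $\ep{\hat\mu_{\eps_2,m^2}(X'')} = \mu_{X^2}$ exactly. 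For the squared term, I would write $\ep{\hat\mu^2_{\eps_1,m}(X')} = \vr{\hat\mu_{\eps_1,m}(X')} + \bigl(\ep{\hat\mu_{\eps_1,m}(X')}\bigr)^2 = \vr{\hat\mu_{\eps_1,m}(X')} + \mu_X^2$, again using \cprop\ref{prop:1bitmean}. Subtracting, $\ep{\hat\mu_{\eps_2,m^2}(X'') - \hat\mu^2_{\eps_1,m}(X')} = \mu_{X^2} - \mu_X^2 - \vr{\hat\mu_{\eps_1,m}(X')}$, and multiplying by $n/(n-1)$ and using \eqref{equ:var} gives $\ep{\hat s^2_{\eps_1,\eps_2,m}} = s^2_X - \frac{n}{n-1}\vr{\hat\mu_{\eps_1,m}(X')}$. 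Since $\vr{\hat\mu_{\eps_1,m}(X')} = \bigoh{m^2/(n\eps_1^2)}$ by \cprop\ref{prop:1bitmean}(ii) and this quantity is nonnegative, the two-sided bound in (i) follows (the $n/(n-1)$ factor is absorbed into the $\bigoh{\cdot}$).

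For the variance (part ii), the strategy is $\vr{U - V} \leq 2\vr{U} + 2\vr{V}$ with $U = \frac{n}{n-1}\hat\mu_{\eps_2,m^2}(X'')$ and $V = \frac{n}{n-1}\hat\mu^2_{\eps_1,m}(X')$. The term $\vr{U}$ is immediate from \cprop\ref{prop:1bitmean}(ii): it is $\bigoh{(m^2)^2/(n\eps_2^2)} = \bigoh{m^4/(n\eps_2^2)}$, matching the second term in the claimed bound. The term $\vr{V}$ — the variance of the square of the mean estimator — is the substantive part. I would bound the fourth central moment of $\hat\mu_{\eps_1,m}(X')$ and combine it with the second moment. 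Writing $\hat\mu_{\eps_1,m}(X') = \frac{m}{n}\sum_i \frac{x_i'(e^{\eps_1}+1)-1}{e^{\eps_1}-1}$ as a sum of $n$ independent bounded terms (each of magnitude $\bigoh{m/\eps_1}$, since $x_i' \in \{0,1\}$ and $e^{\eps_1}-1 = \Theta(\eps_1)$ for bounded $\eps_1$), I would use the standard bound that for a sum $S$ of independent mean-$\mu$ terms, $\vr{S^2} = \bigoh{\mu^2\vr{S} + (\text{4th central moment of }S)}$, and the fourth central moment of a sum of $n$ independent bounded-by-$c$ centered terms is $\bigoh{n c^4 + n^2 \sigma^4}$ where $\sigma^2$ is the per-term variance. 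Plugging $c = \bigoh{m/\eps_1}$, $\sigma^2 = \bigoh{m^2/\eps_1^2}$, and $\mu = \mu_X = \bigoh{m}$, and tracking the $\frac{m^4}{n^4}$ normalization from squaring $\frac{m}{n}\sum$, this yields $\vr{V} = \bigoh{m^4/(n^2\eps_1^4) + m^4/(n^3\eps_1^4)} = \bigoh{m^4/(n^2\eps_1^4)}$, which is the first term in the claimed bound. Summing the two contributions gives (ii).

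The main obstacle is the second part: controlling $\vr{\hat\mu^2_{\eps_1,m}(X')}$, i.e., the variance of the square of a sum of independent random variables, which is a fourth-moment computation rather than a second-moment one. The clean way to handle it is to expand $\hat\mu^2 = \bigl(\frac{m}{n}\sum_i Y_i\bigr)^2$ with $Y_i$ i.i.d. bounded, write $Y_i = \mu + \xi_i$ with $\ep{\xi_i}=0$, expand the square and its variance, and observe that all cross terms with an odd number of factors from any single index vanish by independence; what survives is governed by $\sum_i \ep{\xi_i^4}$ and $\bigl(\sum_i \ep{\xi_i^2}\bigr)^2$ together with the $\mu^2 \vr{\sum_i \xi_i}$ cross term. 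One must be careful that the dependence on $n$ in each surviving term is such that, after the $(m/n)^2$-type normalization, nothing is worse than the stated $m^4/(n^2\eps_1^4)$; the leading $n$-behavior comes from the $\mu^2\sum_i\ep{\xi_i^2}$ term, which after normalization is $\bigoh{m^2 \cdot m^2/(n\eps_1^2)\cdot(1/n)} = \bigoh{m^4/(n^2\eps_1^2)}$, dominated by $m^4/(n^2\eps_1^4)$ for bounded $\eps_1$. The rest is bookkeeping with $e^{\eps_1}-1 = \Theta(\eps_1)$.
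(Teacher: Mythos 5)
Your part (i) and the skeleton of part (ii) coincide with the paper's proof: the paper likewise derives $\ep{\hat s^2_{\eps_1, \eps_2, m}(X',X'')} = s_X^2 - \frac{n}{n-1}\vr{\hat\mu_{\eps_1,m}(X')}$ via $\ep{\hat\mu^2} = \vr{\hat\mu}+\ep{\hat\mu}^2$, and it splits the variance (using independence of $X'$ and $X''$, hence with equality rather than your factor-of-two bound) into $\vr{\hat\mu_{\eps_2,m^2}(X'')} = \bigoh{m^4/(n\eps_2^2)}$ plus $\vr{\hat\mu^2_{\eps_1,m}(X')}$. The gap is in your evaluation of the latter. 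You correctly isolate the cross term $\mu^2\cdot\vr{\hat\mu_{\eps_1,m}(X')}$ as the one carrying the leading $n$-dependence, but you evaluate it as $\bigoh{m^2\cdot\frac{m^2}{n\eps_1^2}\cdot\frac{1}{n}}$; the extra factor $\frac{1}{n}$ is spurious. Writing $\hat\mu_{\eps_1,m}(X') = \mu_X + T$ with $T$ centered, one has $\vr{\hat\mu^2_{\eps_1,m}(X')} = 4\mu_X^2\vr{T} + 4\mu_X\ep{T^3} + \vr{T^2}$, and the first term equals $4\mu_X^2\,\vr{\hat\mu_{\eps_1,m}(X')} = \bigoh{\mu_X^2 m^2/(n\eps_1^2)}$ with no further normalization (the $m^4/n^4$ factor you invoke belongs to the unscaled sum $\sum_i Y_i$, whose mean is $n\mu_X/m = \bigoh{n}$, not $\mu_X$; applying the scaled mean together with the unscaled normalization double-counts a factor of $n$). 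In the worst case $\mu_X = \bigtheta{m}$ this term is $\bigtheta{m^4/(n\eps_1^2)}$, which is \emph{not} dominated by $m^4/(n^2\eps_1^4)$ once $n\eps_1^2 > 1$ --- precisely the regime in which the estimator is meant to be useful --- so your chain of inequalities does not close.

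For context, the paper's own appendix proof is no more careful at this exact point: it asserts $\vr{\bigl(\sum_i Y_i\bigr)^2} = n^2\cdot\bigoh{1}\cdot\bigl(\frac{e^{\eps_1}+1}{e^{\eps_1}-1}\bigr)^4$, which is the order of the fourth \emph{central} moment of the sum and silently drops the same non-centered cross term, so the omission you are wrestling with is real and not resolved there either. The bound in ii) as stated is recovered only if the extra $\bigoh{\mu_X^2 m^2/(n\eps_1^2)}$ contribution is absorbed into the $\bigoh{m^4/(n\eps_2^2)}$ term (legitimate when $\eps_1 = \bigtheta{\eps_2}$, which is how the mechanism is run in the experiments) or added explicitly to the statement. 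To repair your write-up, keep your centering decomposition and fourth-moment bookkeeping --- which is otherwise sound --- but compute the cross term correctly and either carry the additional $\mu_X^2 m^2/(n\eps_1^2)$ term through to the conclusion or state the assumption on $\eps_1,\eps_2$ under which it is subsumed.
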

\begin{proof}
In \appendixtext.
\end{proof}


\stitle{Using Mean/Variance Estimation in Tests.} The test $\testest_{\eps_1, \eps_2}$ uses the above mechanism to collect data and estimate $\mu_A$, $\mu_B$, $s^2_A$, and $s^2_B$ under LDP (consider $X = A, B$), and put the estimates back into \eqref{equ:tstatistic} to calculate $t$ and $df$ in order to conduct $t$-tests. Refer to \appendixtext for detailed description. There is no theoretical guarantee on the testing errors, but from the above discussion, we have:
\begin{theorem}\label{thm:esttest}
$\testest_{\eps_1, \eps_2}$ satisfies $(\eps_1+\eps_2)$-LDP.
\end{theorem}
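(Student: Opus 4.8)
The plan is to show that the data-collection procedure underlying $\testest_{\eps_1,\eps_2}$ collects, from each user, exactly two noisy bits --- $x'_i = \algobit{\eps_1,m}(x_i)$ and $x''_i = \algobit{\eps_2,m^2}(x_i^2)$ --- with independent randomness, and that everything the test subsequently does (forming the estimates $\hat\mu_{\eps_1,m}$, $\hat\mu_{\eps_2,m^2}$, $\hat s^2_{\eps_1,\eps_2,m}$, plugging them into \eqref{equ:tstatistic}, computing a p-value, and comparing to $\alpha$) is post-processing that touches the raw counters only through those two bits. Since LDP is a property of the channel $\algo:\Sigma\to\mathcal Z$ from a user's private value to what leaves the device, it suffices to bound the privacy loss of the map $x_i \mapsto (x'_i, x''_i)$.

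First I would invoke \cprop\ref{prop:1bitdp}: the first bit is produced by $\algobit{\eps_1,m}$ applied to $x_i \in [0,m]$, so that channel is $\eps_1$-LDP; the second bit is produced by $\algobit{\eps_2,m^2}$ applied to $x_i^2 \in [0,m^2]$, and since $x_i \mapsto x_i^2$ is a deterministic pre-processing map into the domain $[0,m^2]$ on which $\algobit{\eps_2,m^2}$ is defined, that channel is $\eps_2$-LDP in $x_i$ as well. Second, I would appeal to sequential composition for LDP, noted in the text just before \eqref{equ:varest} (it follows from the standard DP composition result \cite{sigmod:McSherry09} applied to a one-record dataset): releasing both bits, generated with independent coin flips, satisfies $(\eps_1+\eps_2)$-LDP. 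Concretely, for any $x,y\in\Sigma$ and any output pair $(b_1,b_2)$, independence factorizes the probabilities and the two single-bit LDP bounds multiply to give the $e^{\eps_1+\eps_2}$ ratio, and this extends to arbitrary output subsets $S\subseteq\{0,1\}^2$ by summing. Third, I would close the argument by the post-processing invariance of differential privacy: any (possibly randomized) function of $(x'_i,x''_i)$ that does not re-examine $x_i$ --- in particular the per-user contribution to all the aggregates and hence to the test decision --- inherits $(\eps_1+\eps_2)$-LDP. Taking $\eps = \eps_1+\eps_2$ gives the claim.

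There is essentially no hard step here; the statement is a bookkeeping consequence of \cprop\ref{prop:1bitdp}, deterministic pre-processing, sequential composition, and post-processing. The one point that deserves a sentence of care is that LDP composition must be stated per user and with independence of the randomness used for the two bits --- which the construction explicitly guarantees (``the two bits are collected with independent randomnesses'') --- and that squaring the counter is legitimate precisely because it is applied \emph{before} the noisy mechanism, so it does not weaken the privacy guarantee (it only changes the effective range, which is why the mechanism is instantiated as $\algobit{\eps_2,m^2}$ rather than $\algobit{\eps_2,m}$). I would also remark that no assumption on $\alpha$, $H_0$, the sample sizes, or the distributions $\pa,\pb$ is needed: the LDP guarantee is a worst-case property of the channel alone, which is exactly why Theorem~\ref{thm:esttest} makes no mention of testing error.
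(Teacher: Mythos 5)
Your proposal is correct and matches the paper's argument: the paper gives no separate proof but states that the theorem follows ``from the above discussion,'' namely \cprop\ref{prop:1bitdp} applied to each of the two bits (with the squaring as deterministic pre-processing into $[0,m^2]$), sequential composition of $\eps_1$ and $\eps_2$ for independently randomized releases, and the fact that the rest of \calg\ref{alg:esttest} is post-processing. Your write-up simply makes these three steps explicit.
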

%

\section{Transformation-based LDP Test}

In this section, we give a LDP testing algorithm with guaranteed significance and power. The main idea is that, if a counter $x$ follows some (unknown) population distribution with a (unknown) mean $\mu$, the LDP bit $\algobit{\eps,m}(x)$ follows a Bernoulli distribution with the mean determined by $\mu$ and $\eps$. So in order to compare population means, we can conduct tests directly on the LDP bits and compare Bernoulli means. 

The following proposition firstly gives the relationship between the original population distribution and the resulting Bernoulli distribution on the outputs of $\algobit{\eps,m}$.

\begin{proposition} \label{prop:bindistribution}
If a counter $x \in [0, m]$ follows a distribution $\px$ with mean $\mu_\px$, the LDP bit $x^{\sf bin}=\algobit{\eps,m}(x)$ (as in \eqref{equ:1bitmech}) follows a Bernoulli distribution with the mean
\begin{equation} \label{equ:bindistribution}
p_\px = \pr{x^{\sf bin} = 1} = \frac{\mu_\px}{m} \cdot \frac{e^\eps-1}{e^\eps+1} + \frac{1}{e^\eps+1}.
\end{equation}
\end{proposition}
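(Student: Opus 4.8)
The plan is to compute the distribution of $x^{\sf bin}$ by first conditioning on the realized value of the counter $x$ and then averaging over $\px$. Since $x^{\sf bin} = \algobit{\eps,m}(x)$ takes values in $\{0,1\}$ by construction, it is automatically a Bernoulli random variable; the only real content of the proposition is the closed form for $\pr{x^{\sf bin} = 1}$, so it suffices to evaluate this single probability.

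First I would fix a value $x \in [0,m]$ and read off directly from the definition \eqref{equ:1bitmech} that $\pr{x^{\sf bin} = 1 \mid x} = \frac{1}{e^\eps+1} + \frac{x}{m}\cdot\frac{e^\eps-1}{e^\eps+1}$. The key observation is that this conditional probability is an \emph{affine} function of $x$. Here it matters that the coin flips internal to $\algobit{\eps,m}$ use fresh randomness, independent of $x$, so that the conditional law of $x^{\sf bin}$ given $x$ is exactly the one prescribed by \eqref{equ:1bitmech}; this is what licenses the conditioning step.

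Next I would apply the law of total probability, $\pr{x^{\sf bin} = 1} = \ep{\pr{x^{\sf bin} = 1 \mid x}}$, where the outer expectation is over $x \sim \px$ (a sum if $\px$ is discrete, an integral if it is continuous — the argument is identical). Using linearity of expectation on the affine expression from the previous step together with $\ep{x} = \mu_\px$ yields $\pr{x^{\sf bin} = 1} = \frac{1}{e^\eps+1} + \frac{\mu_\px}{m}\cdot\frac{e^\eps-1}{e^\eps+1}$, which is precisely \eqref{equ:bindistribution}. Finally, to confirm that this quantity is a legitimate Bernoulli parameter I would note that $\frac{x}{m} \in [0,1]$ for every $x \in [0,m]$, so each conditional probability lies in $[0,1]$ (equivalently, invoke the ``round $x$ to a bit, then flip with probability $\frac{1}{e^\eps+1}$'' interpretation of $\algobit{\eps,m}$), and hence so does the $\px$-average $p_\px$.

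There is essentially no obstacle here: the statement is a one-line consequence of linearity of expectation applied to the affine bit-flipping probability. If anything, the only points deserving a word of care are the independence of the mechanism's internal randomness from $x$ (which justifies the conditioning) and the observation that $p_\px \in [0,1]$ so that ``Bernoulli'' is well posed.
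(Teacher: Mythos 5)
Your proof is correct and follows essentially the same route as the paper's: condition on $x$, observe that $\pr{x^{\sf bin}=1\mid x}$ is affine in $x$ by \eqref{equ:1bitmech}, and apply the law of total probability with $\ep{x}=\mu_\px$ to obtain \eqref{equ:bindistribution}. Your additional remarks on the independence of the mechanism's internal randomness and on $p_\px\in[0,1]$ are fine but not substantively different from the paper's argument.
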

\begin{proof}
Let $f$ be the PDF of $\px$. We have
\begin{align*}
& \pr{x^{\sf bin} = 1} = \int_0^m \pr{x^{\sf bin} = 1 \mid x} f(x) {d}x
\\
= & \int_0^m \left(\frac{1}{e^\epsilon+1}+\frac{x}{m}\cdot \frac{e^\epsilon - 1}{e^\epsilon + 1}\right) f(x) {d}x
\\
= & \frac{1}{e^\eps+1} \int_0^m f(x)dx + \frac{1}{m}\cdot \frac{e^\epsilon - 1}{e^\epsilon + 1} \int_0^m x f(x) {d}x
\\
= & \frac{1}{e^\eps+1} \cdot 1 + \frac{1}{m}\cdot \frac{e^\epsilon - 1}{e^\epsilon + 1} \cdot \ep{x}
\\
= & \frac{\mu_\px}{m} \cdot \frac{e^\eps-1}{e^\eps+1} + \frac{1}{e^\eps+1}.
\end{align*}
Therefore, $x^{\sf bin}$ follows a Bernoulli distribution.
\end{proof}

\begin{algorithm}[t]
{\bf Input:} Two samples $A = \{a_i\}_{i \in [n_A]}$ and $B = \{b_i\}_{i \in [n_B]}$.
\\
{\bf Null hypothesis $H_0$:} $\mu_\pa - \mu_\pb = d_0$.
\\
{\bf Parameters:} privacy budget $\eps$ and significance level $\alpha$. \\ \vspace{-0.4cm}
\begin{algorithmic}[1]
\STATE For users $i = 1$ to $n_A$ do
\STATE ~~~ Encode $a^{\sf bin}_i = \algobit{\eps,m}(a_i)$ and send $a^{\sf bin}_i$ to the server.
\STATE For users $i = 1$ to $n_B$ do
\STATE ~~~ Encode $b^{\sf bin}_i = \algobit{\eps,m}(b_i)$ and send $b^{\sf bin}_i$ to the server.
\STATE Server receives $A^{\sf bin} = \{a^{\sf bin}_i\}$ and $B^{\sf bin} = \{b^{\sf bin}_i\}$.
\STATE Let the transformed null hypothesis be
\[H_0^{\sf bin}: p_\pa - p_\pb = \frac{d_0}{m} \cdot \frac{e^\eps-1}{e^\eps+1},\]
where $p_\pa$ ($p_\pb$) is the distribution mean of $A^{\sf bin}$ ($B^{\sf bin}$).
\STATE Conduct a $t$-test with null hypothesis $H_0^{\sf bin}$ on $A^{\sf bin}$ and $B^{\sf bin}$ at significance level $\alpha$: accept (or reject) $H_0$ if and only if $H_0^{\sf bin}$ is accepted (or rejected).
\end{algorithmic}
\caption{$\testbin_{\eps}$: Transformation-based LDP Test}
\label{alg:bittest}
\end{algorithm}

The test process $\testbin_{\eps}$ is described in \calg\ref{alg:bittest}. Two samples, $A = \{a_i\}_{i \in [n_A]}$ and $B = \{b_i\}_{i \in [n_B]}$, are drawn from two distributions $\pa$ and $\pb$ with means $\mu_\pa$ and $\mu_\pb$. After applying the mechanism $\algobit{\eps,m}$, the LDP bits $A^{\sf bin} = \algobit{\eps,m}(A) = \{\algobit{\eps,m}(a_i)\}_{i \in [n_A]}$ and $B^{\sf bin}$ $=$ $\algobit{\eps,m}(B)$ collected (lines~1-5) are two samples from Bernoulli distributions with means $p_\pa$ and $p_\pb$. With $\mu_\pa - \mu_\pb = d_0$, from \eqref{equ:bindistribution}, we have $p_\pa - p_\pb = d_0^{\sf bin} = ({d_0}/{m}) \cdot ((e^\eps-1)/(e^\eps+1))$.
An important observation here is that the relative order between $\mu_\pa$ and $\mu_\pb$ is the same as the one between $p_\pa$ and $p_\pb$, \ie, $\mu_\pa - \mu_\pb \geq d_0$ $\Leftrightarrow$ $p_\pa - p_\pb \geq d_0^{\sf bin}$. Therefore, we can conduct a test on $A^{\sf bin}$ and $B^{\sf bin}$ to compare $p_\pa$ and $p_\pb$ with a null hypothesis $H_0^{\sf bin}: p_\pa - p_\pb = d_0^{\sf bin}$ (line~6), in order to compare $\mu_\pa$ and $\mu_\pb$ and reject or accept $H_0$ (line~7).

Indeed, $\testbin_\eps$ preserves $\eps$-LDP for each user from \cprop\ref{prop:1bitdp} and the way how $A^{\sf bin}$ and $B^{\sf bin}$ are collected.
\begin{theorem} \label{thm:bittest:privacy}
$\testbin_{\eps}$ satisfies $\eps$-LDP.
\end{theorem}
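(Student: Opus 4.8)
The plan is to observe that $\testbin_{\eps}$ touches each user's private counter exactly once, through a single application of the mechanism $\algobit{\eps,m}$, and that everything the server does afterwards is post-processing of already-privatized bits together with public parameters. So the argument reduces to two ingredients: the per-user guarantee of $\algobit{\eps,m}$, and the fact that post-processing cannot weaken LDP.

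First I would recall from \cprop\ref{prop:1bitdp} that for every user and every pair of possible counter values $x, y \in [0,m]$ and every $S \subseteq \{0,1\}$ we have $\pr{\algobit{\eps,m}(x) \in S} \leq e^\eps \cdot \pr{\algobit{\eps,m}(y) \in S}$. In \calg\ref{alg:bittest}, the $i$-th user of $A$ releases only $a^{\sf bin}_i = \algobit{\eps,m}(a_i)$, computed with fresh randomness independent of all other users, and symmetrically for the users of $B$ (lines~1--5). Hence the channel from any single user's input to the message that user sends to the server is exactly $\algobit{\eps,m}$, which is $\eps$-LDP by \cdef\ref{def:ldp}.

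Next I would note that lines~6--7 of \calg\ref{alg:bittest} --- forming $d_0^{\sf bin}$, stating $H_0^{\sf bin}$, and running the $t$-test on $A^{\sf bin}$ and $B^{\sf bin}$ at level $\alpha$ --- depend on the raw counters only through the collected bits $\{a^{\sf bin}_i\}$, $\{b^{\sf bin}_i\}$ and the public quantities $m$, $\eps$, $\alpha$, $d_0$, $n_A$, $n_B$. Since the LDP property of a user's released message is closed under post-processing, composing the server's (possibly randomized) decision map with the output of $\algobit{\eps,m}$ cannot increase the worst-case likelihood ratio, so the bound $e^\eps$ on each user's contribution is preserved. Therefore, from the data collector's perspective, the full transcript is $\eps$-LDP for every user, which is the claim.

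I do not anticipate a genuine obstacle here; the only point requiring slight care is the bookkeeping that, unlike $\testest_{\eps_1,\eps_2}$, no sequential composition is invoked --- each user runs $\algobit{\eps,m}$ a single time, so the budget stays $\eps$ rather than a sum, and the $t$-test consumes no additional budget because it never reads the un-privatized counters.
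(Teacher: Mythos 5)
Your proposal is correct and matches the paper's own (very brief) justification, which simply cites \cprop\ref{prop:1bitdp} and the way $A^{\sf bin}$ and $B^{\sf bin}$ are collected; you have just made explicit the two ingredients the paper leaves implicit, namely that each user invokes $\algobit{\eps,m}$ exactly once and that the server's $t$-test is post-processing of already-privatized bits. No gap here.
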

%

We do need a larger sample size in $\testbin_{\eps}$ to get a satisfactory statistical power than the size needed in a non-private $t$-test on the real values in $A$ and $B$.
Intuitively, for a fixed gap between population means $\mu_\pa$ and $\mu_\pb$, the gap between $p_\pa$ and $p_\pb$ is smaller if the domains size $m$ is larger or $\eps$ is smaller; note that the smaller the gap between $p_\pa$ and $p_\pb$ is, the harder for the test $\testbin_{\eps}$ to draw a significant conclusion.
Lower bounds of the power of $\testbin_{\eps}$ (or sample sizes needed) are derived in \cthm\ref{thm:bittest:power} at a significance level $\alpha$.

%

\begin{theorem} \label{thm:bittest:power}
$\testbin_{\eps}$ (\calg\ref{alg:bittest}) has a significance level $\alpha$, \ie, type-I error $\leq \alpha$.
Suppose the alternative $H_1$: $\mu_\pa - \mu_\pb > d_0$ is true with $(\mu_\pa - \mu_\pb) - d_0 = \theta$. The statistical power ($1 -$ type-II error) of $\testbin_\eps$, denoted by $P(\theta)$, is
\begin{align}
& \!\!\! P(\theta) \! \triangleq \! \pr{\testbin_\eps(A, B) = \text{\rm reject} \mid (\mu_\pa - \mu_\pb) - d_0 = \theta} \! \geq \! \nonumber
\\
& \!\! 1 - \exp\left(-{\left(\frac{\theta(e^\eps-1)}{m(e^\eps+1)} \sqrt{\frac{2n_An_B}{n_A+n_B}}- \sqrt{\ln\frac{1}{\alpha}}\right)^2}\right), \label{equ:sp0}
\end{align}
if the samples are large enough: $n_A + n_B = \bigomega{\frac{m^2}{\theta^2 \eps^2}\ln\frac{1}{\alpha}}$.

If we use Normal distributions to approximate Binomial distributions and Student's $t$-distributions (under the condition that $n_A$ and $n_B$ are large enough), we have:
\begin{align}
& P(\theta) \geq 1 - F\left(F^{-1}\left(1-{\alpha}\right) - \frac{p_\theta}{\hat\sigma_{\bf A + B}}\right) \geq \label{equ:spnormal1}
\\
& 1 - F\left(F^{-1}\left(1-{\alpha}\right) - p_\theta \cdot {\sqrt{\frac{4(n_A-1)(n_B-1)}{n_A+n_B-2}}}\right) \label{equ:spnormal2}
\end{align}
where $p_\theta = \frac{\theta}{m} \cdot \frac{e^\eps-1}{e^\eps+1}$,
%
%
$F(\cdot)$ is the CDF of the Normal distribution $\normal(0,1)$, and the sample variance $\hat\sigma_{\bf A+B} = $
\begin{align} 
& \!\! \sqrt{\frac{\one_{A^{\sf bin}}/n_A - \one^2_{A^{\sf bin}}/n^2_A}{n_A-1} + \frac{\one_{B^{\sf bin}}/n_B - \one^2_{B^{\sf bin}}/n^2_B}{n_B-1}}, ~\text{where} \nonumber
\\
& \!\! \text{$\one_X = |\{x \in X \mid x = 1\}|$ is the number of $1$'s in $X$.} \label{equ:samplevarbin}
\end{align}

In particular, if $n_A = n_B = n$ and we require that the statistic power $P(\theta) \geq 1-\beta$, it suffices to have
\begin{equation} \label{equ:esnormal}
n = \left(F^{-1}\left(1-{\alpha}\right) - F^{-1}(\beta)\right)^2 \cdot \frac{1}{2p_\theta^2} + 1.
\end{equation}
\end{theorem}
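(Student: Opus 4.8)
The plan is to reduce the whole statement to a one-sided two-sample test of Bernoulli means and then estimate its error probabilities twice: once non-asymptotically via Hoeffding's inequality (giving \eqref{equ:sp0}) and once via the CLT (giving \eqref{equ:spnormal1}--\eqref{equ:esnormal}). By \cprop\ref{prop:bindistribution}, $A^{\sf bin}$ and $B^{\sf bin}$ are i.i.d.\ samples of $\mathrm{Bernoulli}(p_\pa)$ and $\mathrm{Bernoulli}(p_\pb)$, with $p_\px=\frac{\mu_\px}{m}\cdot\frac{e^\eps-1}{e^\eps+1}+\frac{1}{e^\eps+1}$; write $\hat p_A=\one_{A^{\sf bin}}/n_A$ and $\hat p_B=\one_{B^{\sf bin}}/n_B$. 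The one-sided test rejects $H_0$ (equivalently $H_0^{\sf bin}$) when $\hat p_A-\hat p_B$ exceeds $d_0^{\sf bin}$ by a margin; for the non-asymptotic analysis I fix the margin $\tau=\sqrt{\tfrac{n_A+n_B}{2n_An_B}\ln\tfrac1\alpha}$ (for the $t$-test implementation the margin is instead $\hat\sigma_{\bf A+B}\,F^{-1}(1-\alpha)$). The key simple observation, already noted before the theorem, is that $\mu_\pa-\mu_\pb=d_0$ iff $p_\pa-p_\pb=d_0^{\sf bin}$ and $\mu_\pa-\mu_\pb>d_0$ iff $p_\pa-p_\pb>d_0^{\sf bin}$, so the test on Bernoulli means has exactly the same type-I/type-II behavior as the one on population means.

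For the significance claim, under $H_0$ we have $\ep{\hat p_A-\hat p_B}=d_0^{\sf bin}$, and $\hat p_A-\hat p_B$ is an affine function of $n_A+n_B$ independent bits, the $i$-th contributing a term of range $1/n_A$ (resp.\ $1/n_B$). Hoeffding's inequality gives $\pr{(\hat p_A-\hat p_B)-\ep{\hat p_A-\hat p_B}\ge t}\le\exp\!\big(-2t^2\,\tfrac{n_An_B}{n_A+n_B}\big)$, and plugging $t=\tau$ makes the exponent $-\ln\tfrac1\alpha$, so type-I error $\le\alpha$. For the power under $H_1$ with $(\mu_\pa-\mu_\pb)-d_0=\theta>0$, \cprop\ref{prop:bindistribution} gives $\ep{\hat p_A-\hat p_B}=d_0^{\sf bin}+p_\theta$ with $p_\theta=\tfrac{\theta}{m}\cdot\tfrac{e^\eps-1}{e^\eps+1}$, so $1-P(\theta)=\pr{\ep{\hat p_A-\hat p_B}-(\hat p_A-\hat p_B)>p_\theta-\tau}$. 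Provided $p_\theta>\tau$ — which, since $\tfrac{e^\eps-1}{e^\eps+1}=\Theta(\eps)$, is precisely the hypothesis $n_A+n_B=\bigomega{\tfrac{m^2}{\theta^2\eps^2}\ln\tfrac1\alpha}$ and also the condition making the bound non-vacuous — the lower-tail Hoeffding bound yields $1-P(\theta)\le\exp\!\big(-2(p_\theta-\tau)^2\,\tfrac{n_An_B}{n_A+n_B}\big)$. Rewriting the exponent as $\big(p_\theta\sqrt{\tfrac{2n_An_B}{n_A+n_B}}-\tau\sqrt{\tfrac{2n_An_B}{n_A+n_B}}\big)^2$ and using $\tau\sqrt{\tfrac{2n_An_B}{n_A+n_B}}=\sqrt{\ln\tfrac1\alpha}$ gives exactly \eqref{equ:sp0}.

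For the Normal-approximation bounds I replace Hoeffding by Slutsky's theorem: the studentized statistic $T=((\hat p_A-\hat p_B)-d_0^{\sf bin})/\hat\sigma_{\bf A+B}$ is approximately $\normal(p_\theta/\hat\sigma_{\bf A+B},1)$ under $H_1$ (with $\hat\sigma_{\bf A+B}$ consistent for the true standard error and the $t_{df}$ critical value $\to F^{-1}(1-\alpha)$), so $P(\theta)=\pr{T>F^{-1}(1-\alpha)}=1-F\!\big(F^{-1}(1-\alpha)-p_\theta/\hat\sigma_{\bf A+B}\big)$, which is \eqref{equ:spnormal1}; here one also uses $\one_X/n_X-\one_X^2/n_X^2=\hat p_X(1-\hat p_X)$ to identify $\hat\sigma_{\bf A+B}^2$ in \eqref{equ:samplevarbin} with $\frac{\hat p_A(1-\hat p_A)}{n_A-1}+\frac{\hat p_B(1-\hat p_B)}{n_B-1}$. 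Since $\hat p_X(1-\hat p_X)\le 1/4$, we get $\hat\sigma_{\bf A+B}^2\le\frac{n_A+n_B-2}{4(n_A-1)(n_B-1)}$, and because $1-F$ is decreasing, substituting this worst case gives \eqref{equ:spnormal2}. For $n_A=n_B=n$ this reads $\hat\sigma_{\bf A+B}^2\le\frac{1}{2(n-1)}$, hence $P(\theta)\ge1-F\!\big(F^{-1}(1-\alpha)-p_\theta\sqrt{2(n-1)}\big)$; requiring the right side $\ge1-\beta$ is equivalent to $p_\theta\sqrt{2(n-1)}\ge F^{-1}(1-\alpha)-F^{-1}(\beta)$, i.e.\ to \eqref{equ:esnormal}.

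The main obstacle is the bookkeeping in the non-asymptotic step: one must choose the rejection margin $\tau$ so that type-I is controlled at $\alpha$ and, with that same $\tau$, the Hoeffding exponent for the lower tail collapses exactly into the squared expression of \eqref{equ:sp0}, while simultaneously verifying that the ``large enough sample'' hypothesis is nothing more than the threshold $p_\theta>\tau$ below which the bound degenerates. Once this is in place, the Normal-approximation steps are routine: they only add the consistency argument for $\hat\sigma_{\bf A+B}$ and the deterministic worst-case bound $p(1-p)\le1/4$ to pass from \eqref{equ:spnormal1} to \eqref{equ:spnormal2} and then to \eqref{equ:esnormal}.
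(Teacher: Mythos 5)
Your proposal is correct and follows essentially the same route as the paper's own proof: the paper uses McDiarmid's inequality on the statistic $z=\hat p_A-\hat p_B$ with the identical threshold $z_0=\sqrt{(\tfrac{1}{2n_A}+\tfrac{1}{2n_B})\ln\tfrac1\alpha}$ (McDiarmid reduces to Hoeffding here, since $z$ is a sum of independent bounded terms), and the same Normal-approximation argument for \eqref{equ:spnormal1}--\eqref{equ:esnormal}. The only difference is that you make explicit the step $\hat p(1-\hat p)\le 1/4$ needed to pass from \eqref{equ:spnormal1} to \eqref{equ:spnormal2}, which the paper leaves implicit.
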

\begin{proof}
%
%
%
%
Let's focus on the case $d_0 = 0$. The proof can be easily generalized for $d_0 > 0$ by adding a constant.

Consider a test with a null hypothesis $H_0^{\sf bin}$ (in line~7 of \calg\ref{alg:bittest}) using the following test statistic:
\[
z(A^{\sf bin}, B^{\sf bin}) = \frac{1}{n_A}\one_{A^{\sf bin}} - \frac{1}{n_B}\one_{B^{\sf bin}} 
\]
($\one_X$ is defined in \eqref{equ:samplevarbin}).
Using the linearity of expectation, we have $\ep{z(A^{\sf bin}, B^{\sf bin})} = p_\pa - p_\pb$ ($p_\px$ is defined in \eqref{equ:bindistribution}).

The proof of \eqref{equ:sp0} is in \appendixtext, using a weaker test and McDiarmid's inequality.

We now focus on \eqref{equ:spnormal1}-\eqref{equ:spnormal2}.
%
%
Let's first state the Normal approximation: {\em a Binomial distribution $\binomial(n,p)$ with $n$ trials and success probability $p$ can be approximated by a normal distribution $\normal(np, np(1-p))$, if $n$ is large enough.}

From \cprop\ref{prop:bindistribution}, we have $\one_{A^{\sf bin}} \sim \binomial(n_A, p_\pa)$ and $\one_{B^{\sf bin}} \sim \binomial(n_B, p_\pb)$. Using Normal approximations to $\one_{A^{\sf bin}}$ and $\one_{B^{\sf bin}}$, under $H^{\sf bit}_0$ (when $d_0 = 0$), we have
\[
\frac{z(A^{\sf bin}, B^{\sf bin})}{\sigma_{\bf A + B}} \sim \normal(0,1),
\]
where $\sigma_{\bf A + B} = \sqrt{p_\pa(1-p_\pa)/n_A+p_\pb(1-p_\pb)/n_B}$.

In the test (line~7), we use $\hat\sigma_{\bf A + B}$ in \eqref{equ:samplevarbin}
%
%
to approximate $\sigma_{\bf A + B}$. When $n_A$ and $n_B$ are large enough,
\[
\frac{z(A^{\sf bin}, B^{\sf bin})}{\hat\sigma_{\bf A + B}} \sim \normal(0,1),
\]
from the Normal approximation to Student's $t$-distribution.

At a significance level of $\alpha$, we need to find an rejection threshold $z_0$ of $z(A^{\sf bin}, B^{\sf bin})$, \st, under $H_0^{\sf bin}$, $\pr{\text{\rm reject}}$ $=$ $\pr{z(A^{\sf bin}, B^{\sf bin}) \geq z_0} \leq \alpha$.
Therefore, based on the CDF of the Normal distribution $\normal(0,1)$, we reject $H^{\bf bit}_0$ iff
\[
z(A^{\sf bin}, B^{\sf bin}) \geq z_0 = F^{-1}(1-\alpha) \cdot \hat\sigma_{\bf A + B}.
\]

Now let's estimate the statistical power under the alternative hypothesis with $\mu_\pa - \mu_\pb = \theta$, or equivalently,
\[
p_\pa - p_\pb = \frac{\theta}{m} \cdot \frac{e^\eps-1}{e^\eps+1} \triangleq p_\theta.
\]
Under the above condition, we have
\[
\frac{z(A^{\sf bin}, B^{\sf bin}) - p_\theta}{\hat\sigma_{\bf A + B}} \sim \normal(0,1).
\]
And thus the statistical power $P(\theta) =$
\begin{align}
& = \pr{\frac{z(A^{\sf bin}, B^{\sf bin})}{\hat\sigma_{\bf A + B}} \geq F^{-1}\left(1-{\alpha}\right) ~\middle|~ p_\pa - p_\pb = p_\theta} \nonumber
\\
& = \pr{\frac{z(A^{\sf bin}, B^{\sf bin})-p_\theta}{\hat\sigma_{\bf A+B}} \geq F^{-1}\left(1-{\alpha}\right)-\frac{p_\theta}{\hat\sigma_{\bf A+B}}} \nonumber
\\
& = 1 - F\left(F^{-1}\left(1-{\alpha}\right) - \frac{p_\theta}{\hat\sigma_{\bf A + B}}\right) \nonumber
\\
& \geq 1 - F\left(F^{-1}\left(1-{\alpha}\right) - p_\theta \cdot {\sqrt{\frac{4(n_A-1)(n_B-1)}{n_A+n_B-2}}}\right). \nonumber
\end{align}
The sample size lower bound in \eqref{equ:esnormal} is directly from \eqref{equ:spnormal2}.
\end{proof}

\stitle{How to use \cthm\ref{thm:bittest:power}.} All three of \eqref{equ:sp0}-\eqref{equ:spnormal2} can be used to estimate the lower bound of statistical power, and the largest one can be picked. \eqref{equ:spnormal1} is likely to be the tightest one, but it needs the sample variance $\hat\sigma_{\bf A+B}$ \eqref{equ:samplevarbin} in the LDP samples $A^{\sf bin}$ and $B^{\sf bin}$, while the other two only need the sample sizes. Before we draw samples from populations, \eqref{equ:esnormal} can be used to estimate the sample sizes needed. As will be verified in the experiments, the estimated sample sizes are sufficient to guarantee the required statistical power.
It is interesting to note that the type-II error of $\testbin_{\eps}$ has a dominated term $\bigoh{-\exp(\sqrt{n})}$ similar to the one of unit circle mechanism \cite{icml:KakizakiFS17} for a different test ($\chi^2$-test) under DP. The additional term $\frac{1}{m}$ is from LDP.

\stitle{About Laplace mechanism.} $\testbin_{\eps}$ can be adapted if each user's counter is collected using the Laplace-perturbation mechanism \cite{focs:DuchiJW13}. However, sending Laplace-perturbed counters is costly and we cannot expect better statistical power from it.

\subsection{Extensions for Hybrid Privacy Requirements}
{\bf Hybrid privacy model.} The transformation-based LDP test $\testbin_{\eps}$ can be extended for population with {\em hybrid privacy requirements}:
more formally, {\em in a random sample $A = \{a_i\}$ (or $B = \{b_i\}$) drawn from the distribution $\pa$ (or $\pb$), some users require $\eps$-LDP, while the others do not.}

\stitle{Rescaling LDP bits.}
Indeed, for users who do not require $\eps$-LDP, we can simply send their exact counter $a_j$ (or $b_j$) to the server. The question is, for those who require $\eps$-LDP, \eg, $a_i$, how to combine their LDP bits $\algobit{\eps,m}(a_i) \in \{0, 1\}$ with exact counters $a_j \in [0, m]$ to conduct hypothesis tests.

\begin{algorithm}[t]
{\bf Input:} Two samples $A = \{a_i\}_{i \in [n_A]}$ and $B = \{b_i\}_{i \in [n_B]}$.
\\
{\bf Null hypothesis $H_0$:} $\mu_\pa - \mu_\pb = d_0$.
\\
{\bf Parameters:} privacy budget $\eps$ and significance level $\alpha$. \\ \vspace{-0.4cm}
\begin{algorithmic}[1]
\STATE For users $i = 1$ to $n_A$ do
\STATE ~~~ If $a_i$ requires $\eps$-LDP then:
\STATE ~~~ ~~~ If $\algobit{\eps,m}(a_i) = 0$ then: $a^{\sf mix}_i = - {m}/{(e^\eps-1)}$;
\STATE ~~~ ~~~ Else: $a^{\sf mix}_i = {m e^\eps}/{(e^\eps-1)}$.
\STATE ~~~ Else: $a^{\sf mix}_i = a_i$.
\STATE ~~~ Send $a^{\sf mix}_i$ to the server.
\STATE For users $i = 1$ to $n_B$ do
\STATE ~~~ If $b_i$ requires $\eps$-LDP then:
\STATE ~~~ ~~~ If $\algobit{\eps,m}(b_i) = 0$ then: $b^{\sf mix}_i = - {m}/{(e^\eps-1)}$;
\STATE ~~~ ~~~ Else: $b^{\sf mix}_i = {m e^\eps}/{(e^\eps-1)}$.
\STATE ~~~ Else: $b^{\sf mix}_i = b_i$.
\STATE ~~~ Send $b^{\sf mix}_i$ to the server.
\STATE Server receives $A^{\sf mix} = \{a^{\sf mix}_i\}$ and $B^{\sf mix} = \{b^{\sf mix}_i\}$.
\STATE Conduct a $t$-test with the null hypothesis $H_0^{\sf mix}: \mu_\pa^{\sf mix} - \mu_\pb^{\sf mix} = d_0$ on $A^{\sf mix}$ and $B^{\sf mix}$ at significance level $\alpha$: accept (or reject) $H_0$ iff $H_0^{\sf mix}$ is accepted (or rejected).
\end{algorithmic}
\caption{$\testmix_{\eps}$: For Hybrid Privacy Requirements}
\label{alg:bittest:mix}
\end{algorithm}

The proposed test $\testmix_{\eps}$ in \calg\ref{alg:bittest:mix} ``re-scales'' LDP bits $\algobit{\eps,m}(a_i)$ to form a mixed sample $A^{\sf mix}$ together with the exact counters. For a user $a_i$ who requires $\eps$-LDP, if $\algobit{\eps,m}(a_i) = 0$, $a^{\sf mix}_i = - {m}/{(e^\eps-1)}$ is sent (line~3), and if $\algobit{\eps,m}(a_i) = 1$, $a^{\sf mix}_i = {m e^\eps}/{(e^\eps-1)}$ is sent (line~4). If a user $a_i$ does not requires $\eps$-LDP, simply send  $a^{\sf mix}_i = a_i$ (line~5). The same process is applied for sample $B$ (lines~7-12). This process can be easily extended if different users require different values of the privacy parameter $\eps$.

We can show that $A^{\sf mix} = \{a^{\sf mix}_i\}$ and $B^{\sf mix} = \{b^{\sf mix}_i\}$ received by the server (line~13) can be considered as being drawn from distributions $\pa^{\sf mix}$ and $\pb^{\sf mix}$, with means $\mu_\pa^{\sf mix} = \mu_\pa$ and $\mu_\pb^{\sf mix} = \mu_\pb$, respectively, but higher variances.
\begin{proposition} \label{prop:mixdistribution}
If a counter $x \in [0, m]$ follows a distribution $\px$ with mean $\mu_\px$, $x^{\sf mix}$ (derived as $a_i^{\sf mix}$ or $b_i^{\sf mix}$ in \calg\ref{alg:bittest:mix}) follows a distribution $\px^{\sf mix}$ with mean $\mu_\px^{\sf mix} =  {\mu_\px}$.
%

Let $\sigma^2_\px$ be the variance of $\px$. If each user in $\px$ requires $\eps$-LDP with probability $r$, the variance of $\px^{\sf mix}$ is
\begin{equation} \label{equ:mixdistribution:var}
\sigma^2_{\px^{\sf mix}} = \sigma^2_\px \cdot (1-r) + \bigoh{m^2/\eps^2} \cdot r.
\end{equation}
\end{proposition}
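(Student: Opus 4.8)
The plan is to condition on the underlying counter value $x$ and, independently of it, on whether the user falls in the ``LDP branch'' (probability $r$) or the ``exact branch'' (probability $1-r$). On the exact branch $x^{\sf mix}=x$, so the whole statement reduces to understanding the two-point random variable that $\algobit{\eps,m}$ together with the rescaling in lines~3--4 of \calg\ref{alg:bittest:mix} produces on the LDP branch, and then averaging over the branch choice.

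First I would check the mean. Writing $c=e^\eps$ and letting $q(x)=\frac{1}{c+1}+\frac{x}{m}\cdot\frac{c-1}{c+1}$ be the probability that $\algobit{\eps,m}(x)=1$ (from \eqref{equ:1bitmech}), on the LDP branch $x^{\sf mix}$ equals $mc/(c-1)$ with probability $q(x)$ and $-m/(c-1)$ otherwise, so $\ep{x^{\sf mix}\mid x,\text{LDP}}=\frac{m}{c-1}\big(q(x)(c+1)-1\big)=x$, using the identity $q(x)(c+1)-1=\frac{x}{m}(c-1)$ --- this is just the single-user form of the unbiasedness in \eqref{equ:meanest}, and indeed the constants $-m/(e^\eps-1)$ and $m e^\eps/(e^\eps-1)$ are chosen precisely to make this hold. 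Since the exact branch trivially has conditional mean $x$ as well, the tower property gives $\mu_{\px^{\sf mix}}=\ep{x^{\sf mix}}=\ep{x}=\mu_\px$.

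For the variance I would use the law of total variance conditioned on $x$: $\vr{x^{\sf mix}}=\vr{\ep{x^{\sf mix}\mid x}}+\ep{\vr{x^{\sf mix}\mid x}}=\sigma_\px^2+\ep{\vr{x^{\sf mix}\mid x}}$, where the first term collapses to $\sigma_\px^2$ because $\ep{x^{\sf mix}\mid x}=x$. For the second term, a short computation of $\ep{(x^{\sf mix})^2\mid x,\text{LDP}}=\frac{m^2}{(c-1)^2}\big(q(x)(c^2-1)+1\big)=\frac{m^2 c}{(c-1)^2}+mx$ gives $\vr{x^{\sf mix}\mid x}=r\big(\frac{m^2 c}{(c-1)^2}+mx-x^2\big)$, and taking expectations with $\ep{x^2}=\sigma_\px^2+\mu_\px^2$ yields $\vr{x^{\sf mix}}=(1-r)\sigma_\px^2+r\big(\frac{m^2 e^\eps}{(e^\eps-1)^2}+m\mu_\px-\mu_\px^2\big)$. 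It remains to bound the parenthesized quantity: $\frac{e^\eps}{(e^\eps-1)^2}=\bigoh{1/\eps^2}$, since $1+t\le e^t$ with $t=\eps/2$ gives $\eps e^{\eps/2}\le e^\eps-1$ and hence $\eps^2 e^\eps\le(e^\eps-1)^2$, while $0\le m\mu_\px-\mu_\px^2\le m^2/4$ because $\mu_\px\in[0,m]$; treating $\eps$ as a constant bounded away from infinity, the whole term is $\bigoh{m^2/\eps^2}$, which is \eqref{equ:mixdistribution:var}.

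The arithmetic is routine; the only points requiring care are keeping the independent branch selection cleanly separated from the perturbation randomness when setting up the conditioning, and noticing the cancellation ($\sigma_\px^2-r\,\ep{x^2}$ contributing $(1-r)\sigma_\px^2$ and $-r\mu_\px^2$) that leaves the clean leading term $(1-r)\sigma_\px^2$ rather than a messier expression.
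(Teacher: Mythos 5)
Your proof is correct and follows essentially the same route as the paper's: both establish the mean via the conditional unbiasedness of the rescaled bit and bound the variance by isolating the $\bigoh{m^2/\eps^2}$ inflation contributed by the LDP branch. The only cosmetic difference is that you condition on $x$ first and apply the law of total variance (obtaining an exact closed form $(1-r)\sigma^2_\px + r\bigl(\tfrac{m^2e^\eps}{(e^\eps-1)^2}+m\mu_\px-\mu_\px^2\bigr)$), whereas the paper conditions on the branch first and bounds $\vr{\algobit{\eps,m}(x)}$ by $\bigoh{1}$; the two computations agree.
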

\begin{proof}
$\mu_\px^{\sf mix} =  {\mu_\px}$ is from \eqref{equ:bindistribution} and the fact that if $x$ requires LDP, $x^{\sf mix} = \algobit{\eps,m}(x) \cdot m \cdot (e^\eps+1)/(e^\eps-1) - {m}/{(e^\eps-1)}$.

The variance of $\px^{\sf mix}$ can be then calculated as follows:
\begin{align*}
& \sigma^2_{\px^{\sf mix}} = \ep{(x^{\sf mix} - \ep{x^{\sf mix}})^2}
\\
= & \ep{(x^{\sf mix} - \ep{x^{\sf mix}})^2 \mid \text{$x$ does not require LDP}} \!\! \cdot \! (1-r)
\\
+ & \ep{(x^{\sf mix} - \ep{x^{\sf mix}})^2 \mid \text{$x$ requires LDP}} \!\! \cdot \! r
\\
= & \sigma^2_\px \cdot (1-r) + \frac{m^2(e^\eps+1)^2}{(e^\eps-1)^2} \cdot \vr{\algobit{\eps,m}(x)} \cdot r,
\end{align*}
where $\vr{\algobit{\eps,m}(x)} = \bigoh{1}$ follows from \eqref{equ:1bitmech}.
\end{proof}

Since the distributions $\pa$ and $\pa^{\sf mix}$ (or, $\pb$ and $\pb^{\sf mix}$) have the same mean, we can conduct a $t$-test with the null hypothesis $H_0^{\sf mix}: \mu_\pa^{\sf mix} - \mu_\pb^{\sf mix} = d_0$ on $A^{\sf mix}$ and $B^{\sf mix}$ in order to accept or reject $H_0$ on $A$ and $B$ (line~14) -- this is because $H_0$ is a necessary and sufficient condition of $H_0^{\sf mix}$. For the same reason, the significance and the power of $\testmix_{\eps}$ are guaranteed to be the same as those of the $t$-test on line~14.
\begin{theorem} \label{thm:bittest:mix:power}
$\testmix_{\eps}$ (\calg\ref{alg:bittest:mix}) satisfied the hybrid privacy model. It has a significance level $\alpha$, \ie, type-I error $\leq \alpha$. Its power is the same as the power of the $t$-test on line~14.
\end{theorem}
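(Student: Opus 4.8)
The statement bundles three assertions --- that the collection in \calg\ref{alg:bittest:mix} meets the hybrid privacy model, that its type-I error is at most $\alpha$, and that its power equals that of the $t$-test on line~14 --- and the plan is to prove them in that order. For privacy, I would observe that for a user $i$ in $A$ who requires $\eps$-LDP, the reported value $a^{\sf mix}_i$ is precisely $g(\algobit{\eps,m}(a_i))$, where $g$ is the fixed, data-independent map with $g(0) = -m/(e^\eps-1)$ and $g(1) = m e^\eps/(e^\eps-1)$. Since $\algobit{\eps,m}$ is $\eps$-LDP by \cprop\ref{prop:1bitdp} and $\eps$-LDP is preserved under data-independent post-processing, $g\circ\algobit{\eps,m}$ is $\eps$-LDP; users who do not require LDP send their exact counter, which the hybrid model permits; and the randomness used by distinct users (both the LDP-versus-exact status and the bit perturbation) is independent by construction. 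Hence the collection satisfies the hybrid privacy requirement.

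For the other two claims I would first record the key structural fact, which is exactly \cprop\ref{prop:mixdistribution}: for any fraction $r$ of LDP users, $\mu_\pa^{\sf mix} = \mu_\pa$ and $\mu_\pb^{\sf mix} = \mu_\pb$. It follows that $\mu_\pa - \mu_\pb = d_0$ if and only if $\mu_\pa^{\sf mix} - \mu_\pb^{\sf mix} = d_0$; that is, $H_0$ holds iff $H_0^{\sf mix}$ holds, and likewise $H_1$ iff $H_1^{\sf mix}$. Moreover, each $a^{\sf mix}_i$ is a function of $a_i$ together with fresh independent randomness, so $A^{\sf mix}$ is an i.i.d.\ sample of size $n_A$ from $\pa^{\sf mix}$ and, similarly, $B^{\sf mix}$ is an i.i.d.\ sample of size $n_B$ from $\pb^{\sf mix}$; hence the line~14 step is a genuine $t$-test for $H_0^{\sf mix}$ on i.i.d.\ samples. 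Because $\testmix_{\eps}$ rejects $H_0$ exactly when this $t$-test rejects $H_0^{\sf mix}$, the equivalence of the hypotheses transfers both error probabilities verbatim: the type-I error of $\testmix_{\eps}$ equals the probability that the line~14 test rejects $H_0^{\sf mix}$ under $H_0^{\sf mix}$, and the power of $\testmix_{\eps}$ equals the probability that it rejects $H_0^{\sf mix}$ under $H_1^{\sf mix}$. In particular, since a correctly calibrated $t$-test has significance level $\alpha$, the type-I error of $\testmix_{\eps}$ is at most $\alpha$.

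The step I expect to require the most care is the last one: a $t$-test controls type-I error exactly only when the sampled populations are Normal, and in general the guarantee is asymptotic via the central limit theorem. I would make this precise by noting that the re-scaled values $-m/(e^\eps-1)$ and $m e^\eps/(e^\eps-1)$ together with the exact counters all lie in a bounded interval, so $\pa^{\sf mix}$ and $\pb^{\sf mix}$ have finite mean and variance and their sample means are approximately Normal once $n_A$ and $n_B$ are large enough; thus the significance claim holds in exactly the same (asymptotic) sense, and under the same ``$n$ large enough'' proviso, as for an ordinary Welch $t$-test, while the variance inflation quantified in \eqref{equ:mixdistribution:var} degrades only the power, never the validity of the level. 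The one further point worth stating explicitly is the per-user independence that makes $A^{\sf mix}$ and $B^{\sf mix}$ i.i.d.\ samples from $\pa^{\sf mix}$ and $\pb^{\sf mix}$, since that is precisely what licenses invoking standard $t$-test theory for the line~14 conclusion.
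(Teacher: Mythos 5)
Your proof is correct and follows essentially the same route as the paper: privacy from \cprop\ref{prop:1bitdp} plus (data-independent post-processing of) the collection procedure, and significance/power transferred verbatim from the line~14 test via the equivalence $H_0 \Leftrightarrow H_0^{\sf mix}$ given by \cprop\ref{prop:mixdistribution}. You simply make explicit two points the paper leaves implicit --- the i.i.d.\ structure of $A^{\sf mix}, B^{\sf mix}$ and the asymptotic (CLT-based) nature of the $t$-test's level guarantee for the non-Normal bounded distributions $\pa^{\sf mix}, \pb^{\sf mix}$ --- which is a welcome clarification rather than a departure.
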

\begin{proof}
The significance/power guarantee is from the above discussion.
The privacy guarantee for each user follows from \cprop\ref{prop:1bitdp} and how $A^{\sf mix}$ and $B^{\sf mix}$ are collected. 
\end{proof}

We do not give a closed-form lower bound of $\testmix_{\eps}$'s statistical power. Since it is equal to the power of the test conducted on line~14 of \calg\ref{alg:bittest:mix}, it depends on the variances of $\pa^{\sf mix}$ and $\pb^{\sf mix}$, which are determined by $\sigma^2_\pa$, $\sigma^2_\pb$, $r$ (the fraction of users requiring LDP), and $m$ as in \eqref{equ:mixdistribution:var}. And since $\sigma^2_\pa$ and $\sigma^2_\pb$ are usually much smaller than $m^2$, $r$ has a significant impact on the power: the smaller $r$ is, the smaller the variances $\sigma^2_{\pa^{\sf mix}}$ and $\sigma^2_{\pb^{\sf mix}}$ are, and the larger the power is. This property will be verified in our experiments.

\stitle{More general hybrid privacy models.} $\testmix_{\eps}$ can be easily extended for the model where each user $i$ requires a different privacy budget $\eps_i$ (just replacing $\eps$ with $\eps_i$ in lines~3-4 and 9-10 of \calg\ref{alg:bittest:mix}). In a different model considered for the heavy-hitter problem \cite{uss:AventKZHL17}, some users require LDP while others only require DP on the testing output: how to conduct effective  tests in this model remains open.

%

\begin{figure}[t]
\centering
\includegraphics[width=8.3cm]{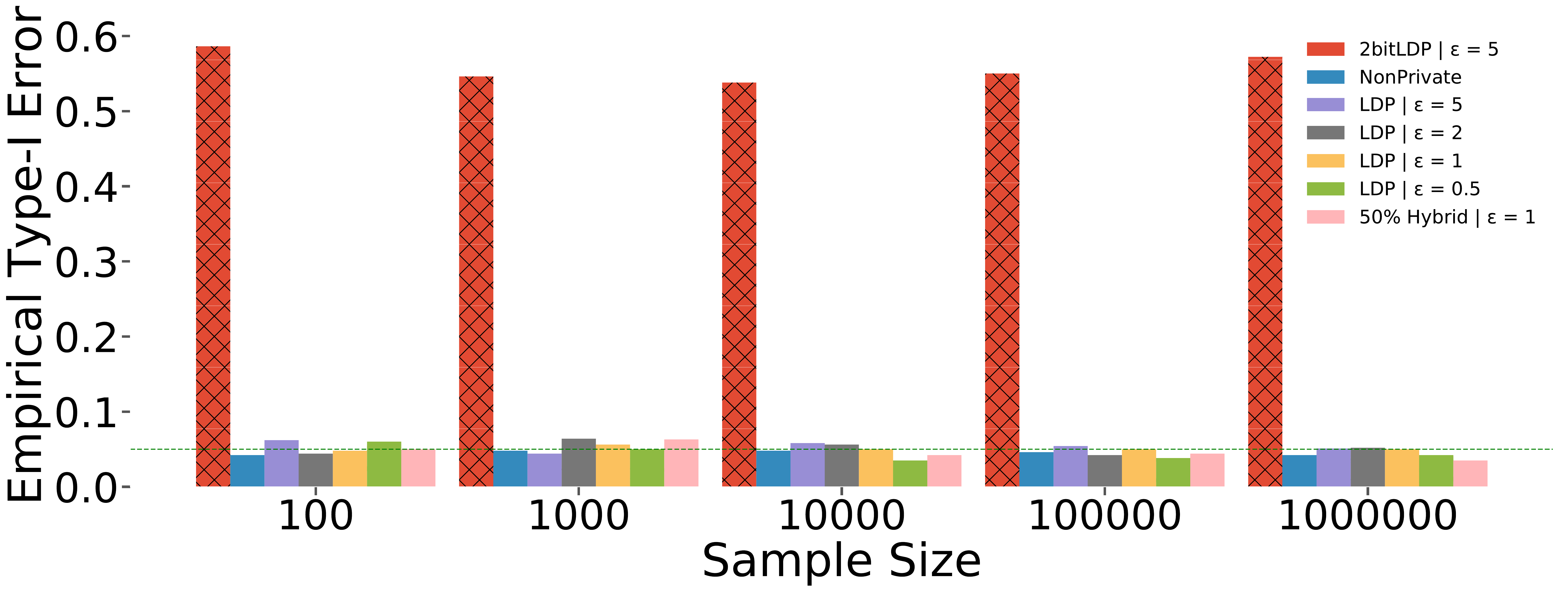}
\caption{Empirical type-I error of different approaches}
\label{fig:exp:type1}
\end{figure}

\begin{figure*}[t]\centering
\subfigure[Gap $\theta = 60$]{
\label{fig:exp:type2:60}
\includegraphics[height=3.9cm]{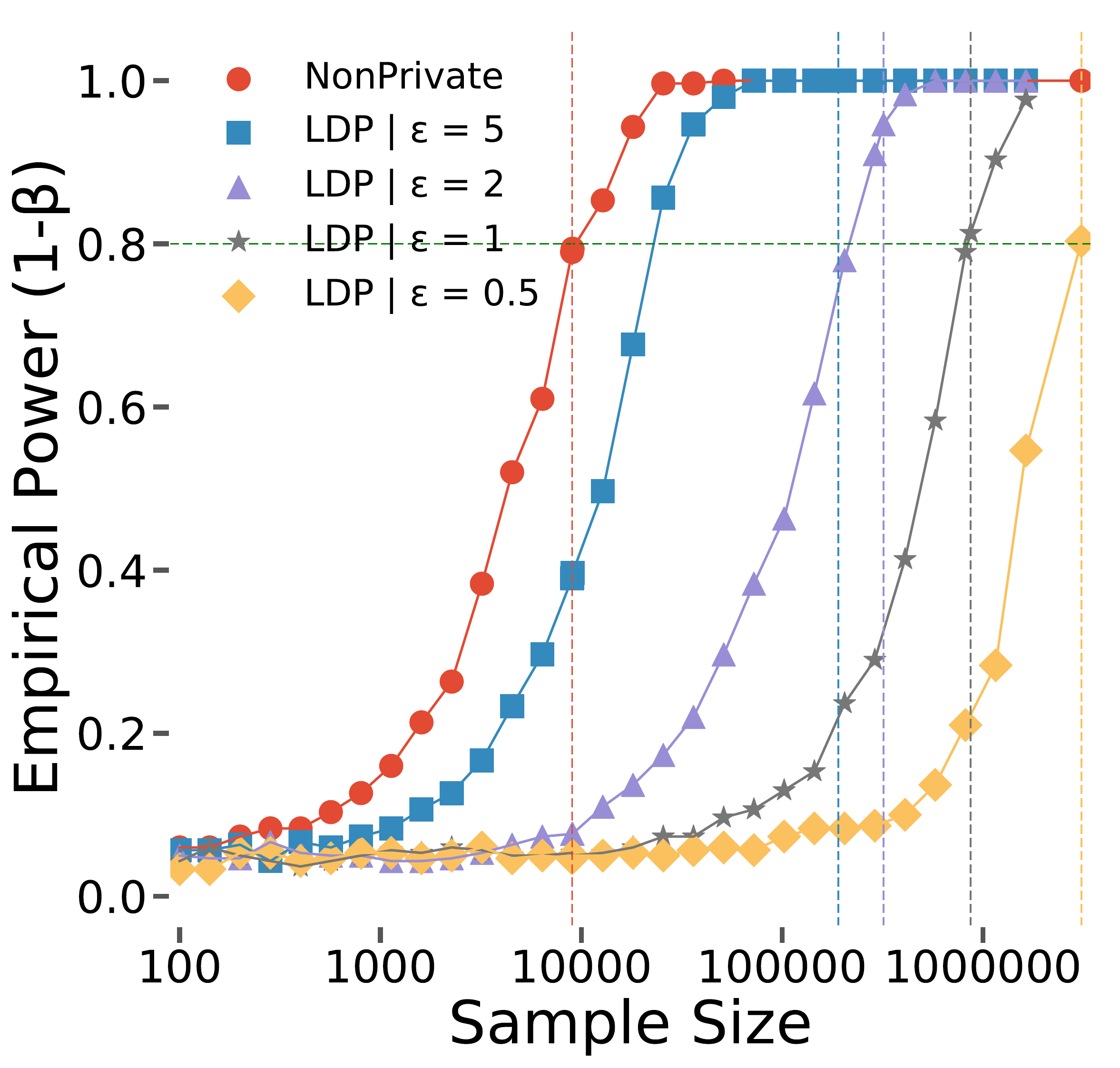}
}
\subfigure[Gap $\theta = 120$]{
\label{fig:exp:type2:120}
\includegraphics[height=3.9cm]{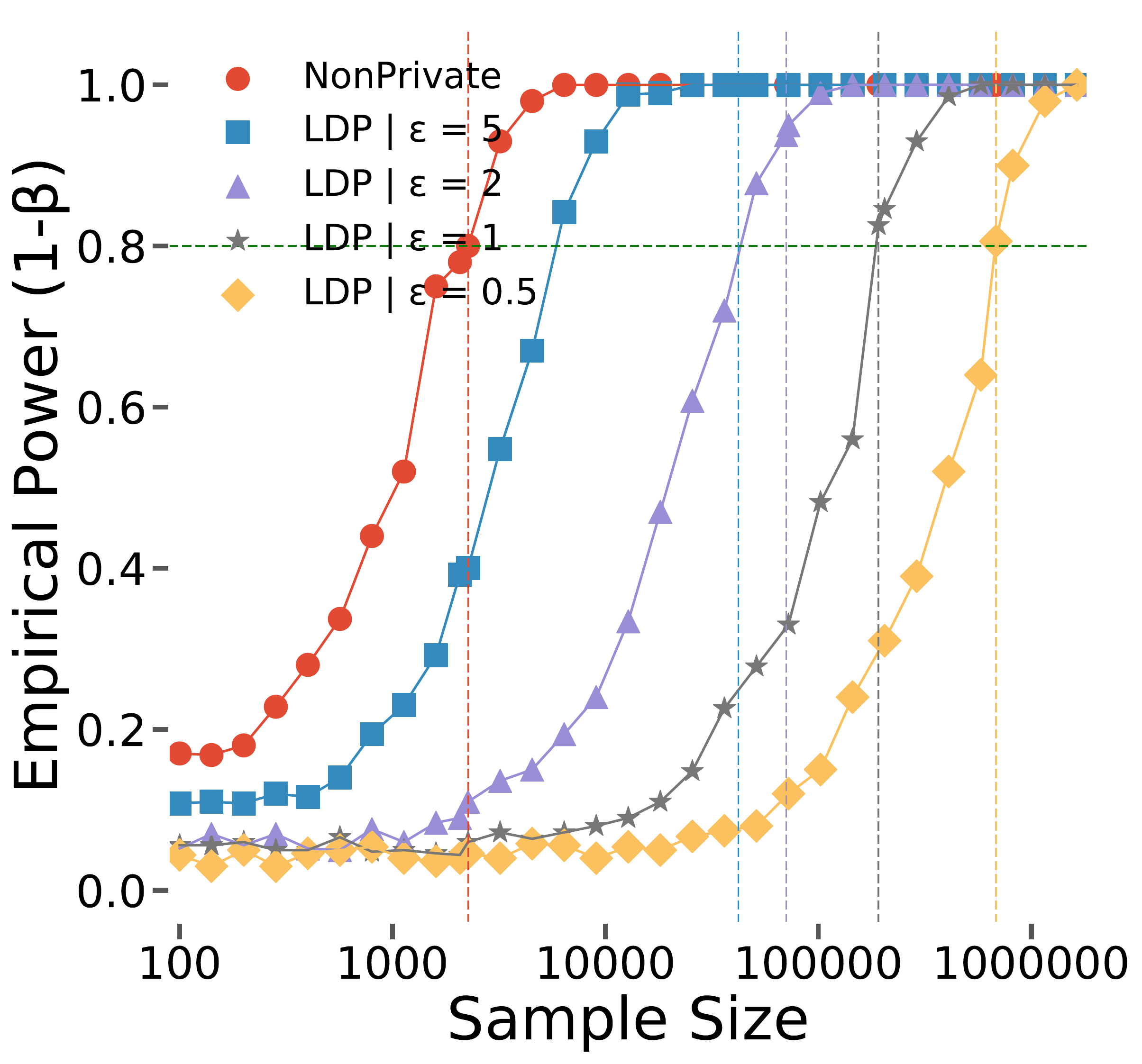}
}
\subfigure[Gap $\theta = 300$]{
\label{fig:exp:type2:300}
\includegraphics[height=3.9cm]{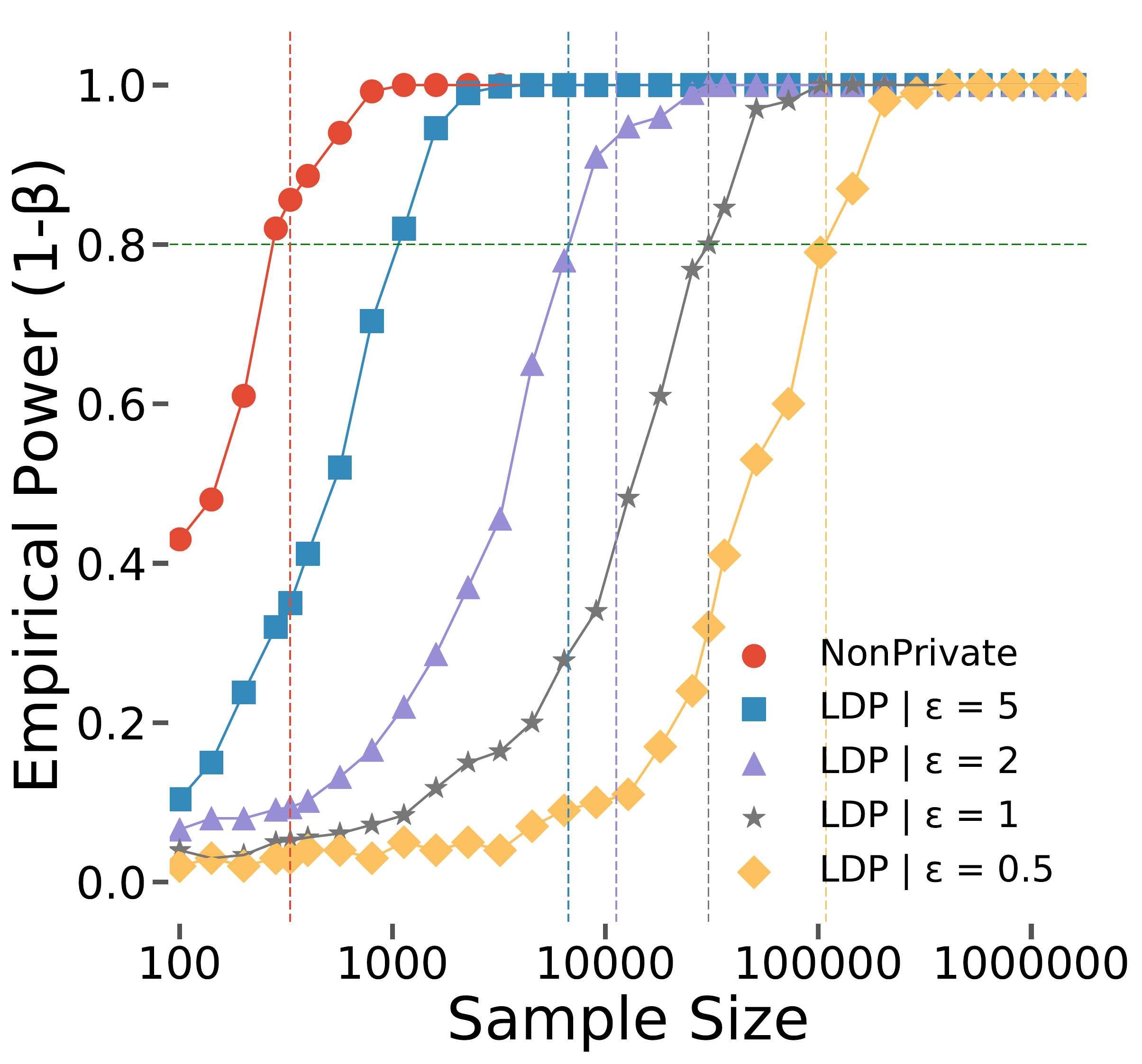}
}
\subfigure[Gap $\theta = 600$]{
\label{fig:exp:type2:600}
\includegraphics[height=3.9cm]{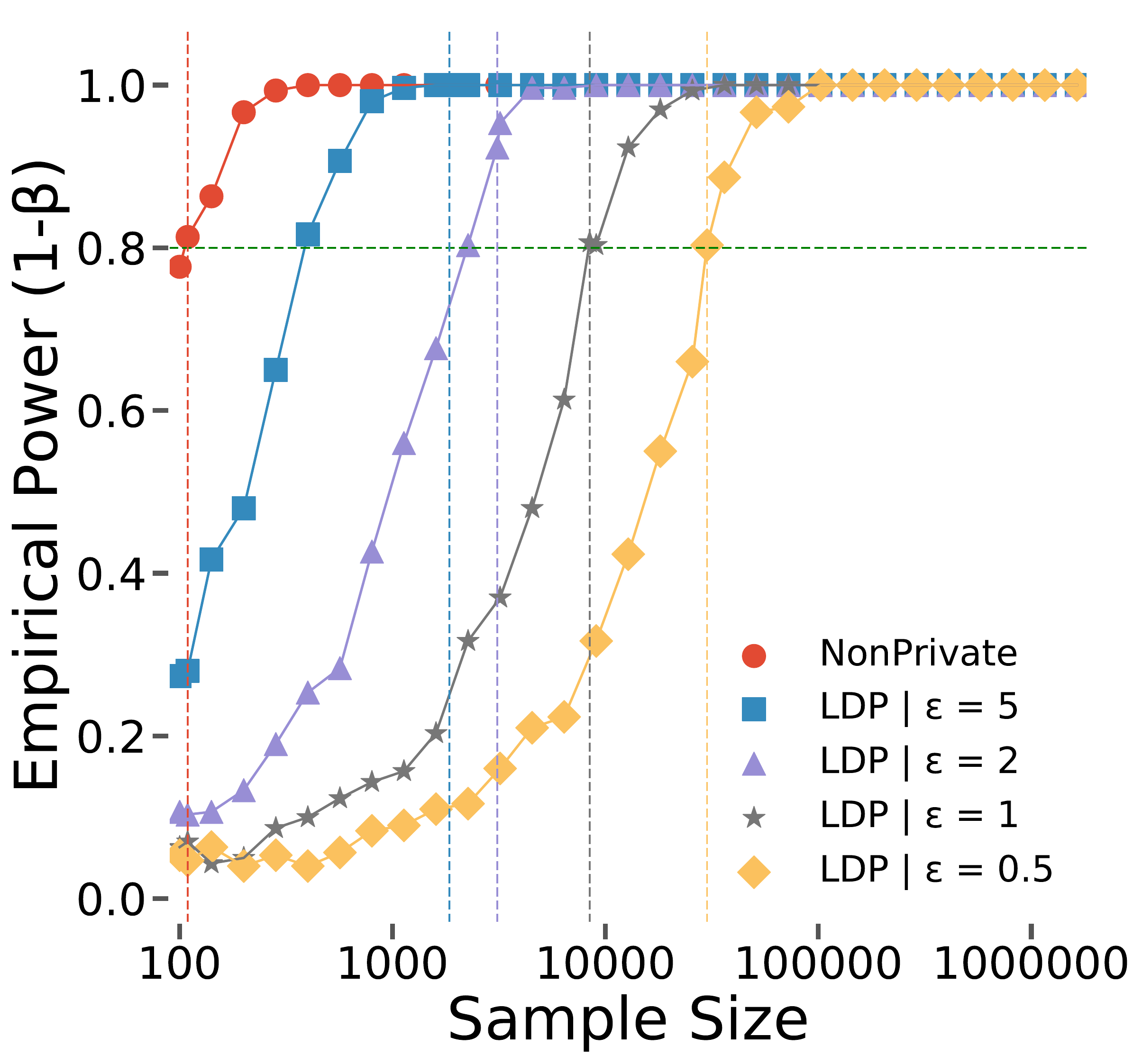}
}
\caption{Empirical power ($1$ $-$ type-II error) of different approaches in different scenarios for varying $\theta$ and $\eps$}
\label{fig:exp:type2}
\end{figure*}

\begin{figure}[t]
\centering
\includegraphics[width=8.3cm]{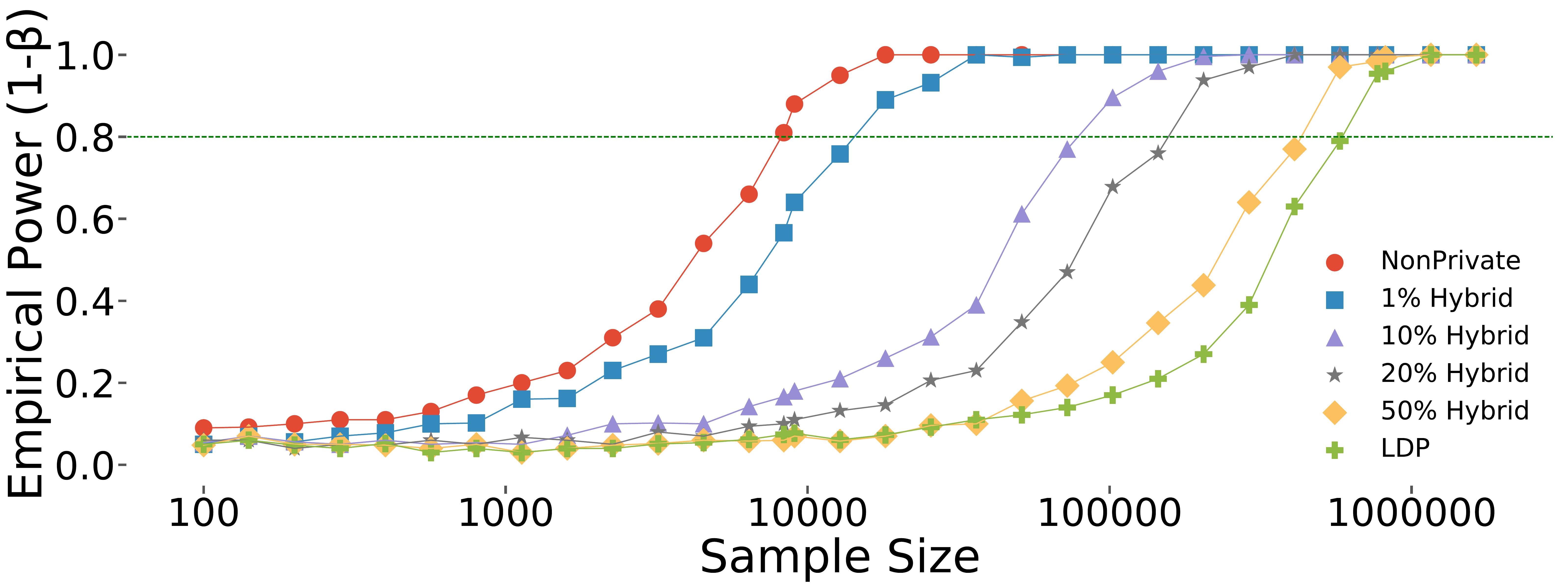}
\caption{Empirical power ($1$ $-$ type-II error) of $\testmix_{\eps}$ for hybrid privacy requirements with varying LDP fraction}
\label{fig:exp:type2:mix}
\end{figure}

\section{Experimental Evaluation}
\label{sec:exp}


\stitle{Dataset and parameters.} There are $20$ million users in this real-world dataset. Each user has a counter with the value in $[0,15000]$, \ie, $m = 15000$.
There are categorical attributes, \eg, country, associated with each user's counter.
%

We vary the privacy parameter $\eps$ from $0.5$ to $5$. The pre-specified significance level $\alpha = 0.05$, and the null hypothesis is $H_0: \mu_\pa - \mu_\pb = 0$. We draw samples from control/treatment with equal sizes $n_A = n_B$.
%
%
For each value of sample sizes, we repeat drawing samples and conducting (LDP) tests $1000$ times, and report the average (empirical) type-I error and statistical power ($1$ $-$ type-II error).

\subsection{Evaluating Significance}
In the first set of experiments, we draw samples from distributions with the same means, and thus $H_0$ holds. We report the average type-I error of different approaches in \cfig\ref{fig:exp:type1}, \ie, the empirical probability that $H_0$ is rejected.

We conduct Welch's $t$-test on the non-privatized samples ({\sf NonPrivate} in \cfig\ref{fig:exp:type1}), estimation-based LDP test $\testest_{\eps_1,\eps_2}$ ({\sf 2bitLDP}), LDP test $\testbin_\eps$ in \calg\ref{alg:bittest} ({\sf LDP}), and $\testmix_\eps$ in \calg\ref{alg:bittest:mix} when $50\%$ users require LDP ({\sf 50\% Hybrid}).

Since it is required that type-I error $\leq$ significance level $\alpha$, ideally, the empirical type-I error should be close to $\alpha$ or less. \cfig\ref{fig:exp:type1} verifies \cthms\ref{thm:bittest:power}-\ref{thm:bittest:mix:power}: for different sample sizes and $\eps$, both $\testbin_\eps$ and $\testmix_\eps$ always have empirical type-I errors close to $0.05$. However, $\testest_\eps$ does not perform well, even with large sample sizes and large $\eps = \eps_1 + \eps_2 = 2.5+2.5$; the reason is that although the sample variance estimator \eqref{equ:varest} used in $\testest_\eps$ has bounded bias (\cprop\ref{prop:2bitvar}), its variance is too high as $m$ is large, and thus its empirical error has a non-negligible impact on the test statistic, leading to much higher type-I error than the pre-specified $\alpha$.

\subsection{Evaluating Power}
%
%
We will evaluate the power of $\testbin_\eps$ ({\sf LDP} in \cfig\ref{fig:exp:type2}) and $\testmix_\eps$ ({\sf x\% Hybrid} in \cfig\ref{fig:exp:type2:mix}) in the rest experiments, as they have been shown to satisfy the significance level empirically.

We draw samples from two populations with means differing by some constant $\theta$, and thus $H_0$ should be rejected. We report the average power in \cfigs\ref{fig:exp:type2}-\ref{fig:exp:type2:mix}, \ie, the empirical probability that $H_0$ is (correctly) rejected.

\stitle{Varying gap $\theta$ between control and treatment.}
We vary $\theta$ between two sub-populations from $60$ to $600$ in \cfigs\ref{fig:exp:type2:60}-\ref{fig:exp:type2:600}. In particular, $\theta = 60$ is a real scenario that we are comparing populations between two countries A and B.
We inject shifts to create synthetic cases with $\theta = 120$-$600$.

As $\theta$ grows, both $\testbin_\eps$ (for different $\eps$) and non-privatized Welch's $t$-test need smaller samples to achieve an empirical power $0.8$ (a threshold commonly used in practice). Intuitively, the larger the mean gap $\theta$ is, the easier it is for a test to distinguish the two populations. This trend is also consistent with theoretical bounds of the power derived in \cthm\ref{thm:bittest:power}.

For $\theta = 60$, to achieve an empirical power $0.8$, $\testbin_\eps$ (with $\eps = 5$) needs roughly three times as many samples as the non-privatized Welch's $t$-test does. It is totally acceptable, because, with the strong LDP privacy guaranteed for each user without the need of trusting the data collector, more users would be willing to share their data (in an LDP way).

\cthm\ref{thm:bittest:power} gives a way \eqref{equ:esnormal} to estimate the sample size needed to achieve certain power in $\testbin_\eps$. We can verify this analytical result here: in \cfigs\ref{fig:exp:type2:60}-\ref{fig:exp:type2:600}, for each $\eps$, we plot a dashed line in the same color as the color of the corresponding power curve -- this dashed line, calculated by \eqref{equ:esnormal}, denotes the sample size needed to achieve a power of $0.8$. It turns out to be a ``safe'' estimation: samples with this size always give the required or better power in \cfig\ref{fig:exp:type2}.


\stitle{Hybrid privacy requirements.} We evaluate our test $\testmix_\eps$ for hybrid privacy requirements in the scenario with $\theta = 60$ and $\eps = 1$ in \cfig\ref{fig:exp:type2:mix}: ``{\sf $x\%$ Hybrid}'' represents $\testmix_\eps$ on a population with a random portion of $x\%$ users requiring $1$-LDP, and ``{\sf LDP}'' represents $\testbin_\eps$. Intuitively, the less users require LDP, the easier it is for us to conduct tests. This intuition is consistent with the empirical performance of $\testmix_\eps$, which calibrates noise for each LDP user and mixes them with exact samples from those who do not require LDP.

Even within one sample, LDP users may follow a different distribution from non-LDP ones. So $\testmix_\eps$ still needs larger samples than the non-private test does. But the sample size needed to achieve a high power (\eg, $0.8$) decreases quickly as the LDP ratio goes down: when $50\%$ users requires LDP, the sample size needed in $\testmix_\eps$ is around half of the one in $\testbin_\eps$; and when $1\%$ users requires LDP, the sample size needed is close to the one in the non-private $t$-test.




\section{Conclusion}
\label{sec:conclusion}

We study how to conduct hypothesis testing for comparing population means under LDP.
We propose two approaches. Both inject noise into each user's data in the samples before sending it to the data collector to ensure LDP. The first one, called estimation-based LDP test, decodes LDP samples aggregatively at the data collector to recover the observed test statistics. The second one, called transformation-based LDP test, studies the relationship between the population distributions and the distributions of LDP samples. It conducts transformed tests directly on LDP samples and converts conclusions for the original tests. The second one has provable significance and lower bounds of power, and it can be extended for population with hybrid privacy requirements.

{\small
\bibliographystyle{aaai}
\bibliography{Ding-ref}
}


\appendix

\section{Description of Estimation-based Test}

The estimation-based LDP test $\testest_{\eps_1, \eps_2}$ is described in \calg\ref{alg:esttest}. For the two samples $A$ and $B$, we use privacy budget $\eps_1$ to collects $A'$ and $B'$, and $\eps_2$ to collects $A''$ and $B''$ (lines~1-7). $\hat\mu_{\eps_1,m}(A')$ and $\hat\mu_{\eps_1,m}(B')$ are used to estimate sample means $\mu_A$ and $\mu_B$, respectively; $\hat s^2_{\eps_1, \eps_2, m}(A',A'')$ and $\hat s^2_{\eps_1, \eps_2, m}(B',B'')$ are used to estimate sample variances (lines~8-9).
The four estimations are embedded into \eqref{equ:tstatistic}, to estimate $t$ and $df$ in order to conduct a $t$-test with null $H_0$ (lines~10-12).
There is no theoretical guarantee on the errors in the testing result, mainly because it is difficult to bound the error in $\hat t$, and the distribution of $\hat t$ is unknown.
%
\begin{algorithm}
{\bf Input:} Two samples $A = \{a_i\}_{i \in [n_A]}$ and $B = \{b_i\}_{i \in [n_B]}$.
\\
{\bf Null hypothesis $H_0$:} $\mu_\pa - \mu_\pb = d_0$.
\\
{\bf Parameters:} privacy budget $\eps_1,\eps_2$ and significance level $\alpha$. \\ \vspace{-0.4cm}
\begin{algorithmic}[1]
\STATE For users $i = 1$ to $n_A$ do
\STATE ~~~ Encode $a'_i = \algobit{\eps_1,m}(a_i)$ and $a''_i = \algobit{\eps_2,m^2}(a^2_i)$.
\STATE ~~~ Send $a'_i$ and $a''_i$ to the server.
\STATE For users $i = 1$ to $n_B$ do
\STATE ~~~ Encode $b'_i = \algobit{\eps_1,m}(b_i)$ and $b''_i = \algobit{\eps_2,m^2}(b^2_i)$.
\STATE ~~~ Send $b'_i$ and $b''_i$ to the server.
\STATE Server receives $A' = \{a'_i\}$, $A'' = \{a''_i\}$, $B'$, and $B''$.
\STATE Estimate $\hat\mu_A \! = \! \hat\mu_{\eps_1,m}(A')$ and $\hat s^2_A \! = \! \hat s^2_{\eps_1, \eps_2, m}(A', A'')$.
\STATE Estimate $\hat\mu_B \! = \! \hat\mu_{\eps_1,m}(B')$ and $\hat s^2_B \! = \! \hat s^2_{\eps_1, \eps_2, m}(B', B'')$.
\STATE Compute $\hat t = t(\hat\mu_A, \hat\mu_B, \hat s^2_A, \hat s^2_B, n_A, n_B)$ as in \eqref{equ:tstatistic}.
\STATE Compute $\hat{df} = df(\hat s^2_A, \hat s^2_B, n_A, n_B)$ as in \eqref{equ:tstatistic}.
\STATE Conduct a $t$-test with null hypothesis $H_0$, using $\hat t$ and $\hat{df}$ as $t$ and $df$ at significance level $\alpha$.
\end{algorithmic}
\caption{$\testest_{\eps_1, \eps_2}$: Estimation-based LDP Test}
\label{alg:esttest}
\end{algorithm}

\section{Proof to \cprop\ref{prop:1bitdp}}

This original proof is from \cite{nips:DingKY17}, and we include it here for the completeness.

Each user sends only one bit $\algobit{\eps,m}(x)$ to the data collector if her/his counter is $x$. According to \eqref{equ:1bitmech}, for any $x \in [0,m]$, the probability of sending $0$ varies from ${1}/({e^\eps+1})$ to ${e^\eps}/({e^\eps+1})$; similarly, the probability of sending $1$ is in the same range. So the ratio between respective probabilities for any two different values of $x$ is at most $e^\eps$. So the privacy guarantee $\algobit{\eps,m}$ follows from \cdef\ref{def:ldp}.
\qed

\section{Proof to \cprop\ref{prop:varhardness}}

Suppose $X = \{x_i\}$ is a random sample drawn from a distribution $\px$ such that $\pr{x_i = m/2} = 1$, and $Y = \{y_i\}$ is a sample drawn from $\py$ such that $\pr{y_i = 0} = 1/2$ and $\pr{y_i = m} = 1/2$. So $X$ has a sample variance $s^2_X = 0$, while with high probability, $Y$ has a sample variance at least $s^2_Y = \bigomega{m^2}$. However, the LDP bits $\algobit{\eps,m}(X) = \{\algobit{\eps,m}(x_i)\}$ and $\algobit{\eps,m}(Y)$ follows the same distribution:
\begin{align*}
\forall x_i \in X: &~ \pr{\algobit{\eps,m}(x_i) = 0} = \pr{\algobit{\eps,m}(x_i) = 1} = 1/2;
\\
\forall y_i \in Y: &~ \pr{\algobit{\eps,m}(y_i) = 0} = \pr{\algobit{\eps,m}(y_i) = 1} = 1/2.
\end{align*}
The conclusion follows from the fact that $X$ and $Y$ cannot be distinguished from each other based on their $\eps$-LDP collections $\algobit{\eps,m}(X)$ and $\algobit{\eps,m}(Y)$, while their sample variances have a $\bigomega{m^2}$ gap with high probability.
\qed

\section{Proof to \cprop\ref{prop:2bitvar}}

Using \cprop\ref{prop:1bitmean} on $X''$, we have
\[
\ep{\hat\mu_{\eps_2,m^2}(X'')} = \mu_{X^2};
\]
and on $X'$, we have
\[
\vr{\hat\mu_{\eps_1,m}(X')} = \bigoh{\frac{m^2}{n\eps_1}}.
\]
So using the linearity of expectation and \eqref{equ:var}-\eqref{equ:varest}, i) follows:
\begin{align}
& \ep{\hat s^2_{\eps_1, \eps_2, m}(X', X'')} \nonumber
\\
= & \frac{n}{n-1}\left(\ep{\hat\mu_{\eps_2,m^2}(X'')} - \ep{\hat\mu^2_{\eps_1,m}(X')}\right) \nonumber
\\
= & \frac{n}{n-1}\left(\mu_{X^2} - \left(\vr{\hat\mu_{\eps_1,m}(X')} + \ep{\hat\mu_{\eps_1,m}(X')}^2\right) \right) \nonumber
\\
= & \frac{n}{n-1}\left(\mu_{X^2} - \mu_X^2\right) - \frac{n}{n-1}\vr{\hat\mu_{\eps_1,m}(X')} \nonumber
\\
= & s_X^2 - \bigoh{\frac{m^2}{n\eps_1}}. \nonumber
\end{align}

To bound the variance of $\hat s^2_{\eps_1, \eps_2, m}(X', X'')$ as in ii), because $X'$ and $X''$ are generated independently, we have
\begin{align}
& \vr{\hat s^2_{\eps_1, \eps_2, m}(X', X'')} \label{equ:varestvar1}
\\
= & \frac{n^2}{(n-1)^2}\left(\vr{\hat\mu_{\eps_2,m^2}(X'')} + \vr{\hat\mu^2_{\eps_1,m}(X')}\right). \nonumber
\end{align}
From \cprop\ref{prop:1bitmean} and how each $x_i''$ is generated, we have
\begin{equation} \label{equ:varestvar2}
\vr{\hat\mu_{\eps_2,m^2}(X'')} = \bigoh{\frac{(m^2)^2}{n\eps_2^2}} = \bigoh{\frac{m^4}{n\eps_2^2}}.
\end{equation}
And from \eqref{equ:meanest}, we have
\begin{align}
& \vr{\hat\mu^2_{\eps_1,m}(X')} = \vr{\left(\frac{m}{n}\sum_{i=1}^{n} \frac{x'_i \cdot (e^{\eps_1}+1)-1}{e^{\eps_1}-1}\right)^2} \nonumber
\\
& = \frac{m^4}{n^4} \cdot \vr{\left(\sum_{i=1}^{n} \frac{x'_i \cdot (e^{\eps_1}+1)-1}{e^{\eps_1}-1}\right)^2} \nonumber
\\
& = \frac{m^4}{n^4} \cdot n^2 \cdot \bigoh{1} \cdot \left(\frac{e^{\eps_1}+1}{e^{\eps_1}-1}\right)^4 = \bigoh{\frac{m^4}{n^2\eps_1^4}}. \label{equ:varestvar3}
\end{align}
Putting \eqref{equ:varestvar2} and \eqref{equ:varestvar3} back to \eqref{equ:varestvar1}, ii) follows.
\qed

\section{The First Part of Proof to \cthm\ref{thm:bittest:power}}

We first need to derive a threshold value $z_0$ to determine the regions of acceptance and rejection for the test statistic $z(A^{\sf bin}, B^{\sf bin})$.
%
%
Under $H_0: \mu_\pa - \mu_\pb = 0$, or equivalently, $H_0^{\sf bin}: p_\pa - p_\pb = 0$, we have $\ep{z(A^{\sf bin}, B^{\sf bin})} = 0$. And flipping the value of any of $a_i^{\sf bin} \in A^{\sf bin}$ (or $b_i^{\sf bin} \in B^{\sf bin}$) changes the value of $z(A^{\sf bin}, B^{\sf bin})$ by at most $1/n_A$ (or $1/n_B$). So from McDiarmid's inequality, we can bound
\begin{align}
\pr{z(A^{\sf bin}, B^{\sf bin}) \geq t} & \leq \exp\left(-\frac{2t^2}{n_A \cdot \frac{1}{n_A^2} + n_B \cdot \frac{1}{n_B^2}}\right) \nonumber
\\
& = \exp\left(-\frac{2t^2}{1/n_A + 1/n_B}\right). \label{equ:bittest:tail}
\end{align}
A significance level of $\alpha$ requires a threshold $z_0$, \st, under $H_0^{\sf bin}$, $\pr{\text{\rm reject}}$ $=$ $\pr{z(A^{\sf bin}, B^{\sf bin}) \geq z_0} \leq \alpha$.
Therefore, based on \eqref{equ:bittest:tail}, $H^{\sf bin}_0$ should be rejected iff:
\[
z(A^{\sf bin}, B^{\sf bin}) \geq z_0 = \sqrt{\left(\frac{1}{2n_A} + \frac{1}{2n_B}\right)\ln\frac{1}{\alpha}}.
\]

Now let's estimate the statistical power under the alternative hypothesis with $\mu_\pa - \mu_\pb = \theta$, or equivalently,
\begin{equation}
p_\pa - p_\pb = \ep{z(A^{\sf bin}, B^{\sf bin})} = \frac{\theta}{m} \cdot \frac{e^\eps-1}{e^\eps+1}. \label{equ:bittest:alternative}
\end{equation}
The statistical power is the probability of rejection
\[
P(\theta) = \pr{z(A^{\sf bin}, B^{\sf bin}) \geq \sqrt{\left(\frac{1}{2n_A} + \frac{1}{2n_B}\right)\ln\frac{1}{\alpha}}},
\]
under \eqref{equ:bittest:alternative}. Let $\Delta = \ep{z(A^{\sf bin}, B^{\sf bin})} - z(A^{\sf bin}, B^{\sf bin})$:
\begin{align}
& P(\theta) = \pr{\Delta \leq \frac{\theta}{m} \cdot \frac{e^\eps-1}{e^\eps+1} - \sqrt{\left(\frac{1}{2n_A} + \frac{1}{2n_B}\right)\ln\frac{1}{\alpha}}} \nonumber
\\
& \text{(using McDiarmid's inequality, again)} \nonumber
\\
\geq & 1 - \exp\left(-\frac{2\left(\frac{\theta}{m} \cdot \frac{e^\eps-1}{e^\eps+1} - \sqrt{\left(\frac{1}{2n_A} + \frac{1}{2n_B}\right)\ln\frac{1}{\alpha}}\right)^2}{\frac{1}{n_A} + \frac{1}{n_B}}\right) \nonumber
\\
\geq & 1 - \exp\left(-{\left(\frac{\theta}{m} \cdot \frac{e^\eps-1}{e^\eps+1} \cdot \sqrt{\frac{2n_An_B}{n_A+n_B}}- \sqrt{\ln\frac{1}{\alpha}}\right)^2}\right). \nonumber
\end{align}
Note that a pre-condition here is that 
\begin{align*}
& \frac{\theta}{m} \cdot \frac{e^\eps-1}{e^\eps+1} \cdot \sqrt{\frac{2n_An_B}{n_A+n_B}}- \sqrt{\ln\frac{1}{\alpha}} \geq 0
\\
\Longleftrightarrow ~~~ & n_A + n_B = \bigomega{\frac{m^2}{\theta^2 \eps^2}\ln\frac{1}{\alpha}}.
\end{align*}
The upper bound in \eqref{equ:sp0} holds under the above condition.
\qed

\end{document}